\documentclass[review,12pt]{elsarticle}
\usepackage[margin=1in,footskip=0.25in]{geometry}
\usepackage{times}
\usepackage{amssymb}
\usepackage{graphicx}
\usepackage{amsmath,amsthm,mathtools}
\usepackage[version=4]{mhchem}
\usepackage{siunitx}
\usepackage{float,subcaption}
\usepackage{longtable,tabularx}
\usepackage{multirow}
\usepackage{xcolor}
\usepackage{hyperref}
\hypersetup{colorlinks, breaklinks, citecolor=blue, linkcolor=blue, urlcolor=blue}
\usepackage[nameinlink,noabbrev,sort&compress]{cleveref}
\Crefname{equation}{Eq.}{Eqs.}
\Crefname{figure}{Fig.}{Figs.}
\crefname{figure}{Figure}{Figures}

\newtheorem{lemma}{Lemma}

\newtheorem{problem}{Problem}
\newtheorem{remark}{Remark}
\newtheorem{assumption}{Assumption}

\newcommand{\sign}{\mathrm{sign}}
\newcommand{\diag}{\mathrm{diag}}

\biboptions{numbers,sort&compress}
\makeatletter
\def\ps@pprintTitle{%
	\let\@oddhead\@empty
	\let\@evenhead\@empty
	\let\@oddfoot\@empty
	\let\@evenfoot\@empty
}
\makeatother
\begin{document}
	
	\begin{frontmatter}
		\title{Integrated Guidance and Control for Path-Following with Bounded Inputs} 
		\author[rmv]{Ram Milan Kumar Verma\corref{cor1}} 
		\cortext[cor1]{Corresponding author}
		\ead{rmverma@aero.iitb.ac.in}
		\author[rmv]{Shashi Ranjan Kumar}
		\ead{srk@aero.iitb.ac.in}
		\author[rmv]{Hemendra Arya}
		\ead{arya@aero.iitb.ac.in}
		\affiliation[rmv]{organization={Indian Institute of Technology Bombay},
			addressline={Powai}, 
			city={Mumbai},
			postcode={400076}, 
			state={Maharashtra},
			country={India}}
		\begin{abstract}
			Precise motion control of underactuated surface vessels is a crucial task in various maritime applications. In this work, we develop a nonlinear motion control strategy for surface vessels inspired by the pursuit guidance philosophy. Any sufficiently smooth path can be seen as a continuum of virtual targets moving along a specified path, which the pursuer is trying to catch. Contrary to the traditional path-following methods, this work develops an integrated guidance and control approach capable of following any smooth path (unlike the ones composed of a finite number of straight lines and circles). The approach relies on steering the vehicle such that its velocity vector aligns with the line-of-sight (the line joining the moving virtual target and the surface vessel), resulting in a tail-chase scenario. This leads to a path-following behavior. This integrated approach also overcomes the disadvantages inherent in the traditional two-loop-based approaches.  Additionally, the proposed work takes into account the asymmetric actuator constraints in the design, which makes the design close to realistic scenarios. Furthermore, the control law has been derived within a nonlinear framework using sliding mode, thus remains applicable for wider envelope. The stability of the proposed control strategy is formally proven. Numerical simulations for various specified paths validate the controller's accurate path-following performance.
		\end{abstract}
		\begin{keyword}
			Uncrewed surface vessel \sep Integrated guidance and control \sep Path-following\sep Marine vehicles \sep Underactuated vessels.
		\end{keyword}
		
	\end{frontmatter}
	\section{Introduction}
	Motivated by the versatility of uncrewed surface vessels (USVs) in addressing challenging real-world problems in the maritime domain, there is a rapidly growing need for advanced guidance strategies. USVs are widely used in various commercial, industrial, defense, and research applications. It encompasses various applications, including tourism, oceanography, climate monitoring, naval mine detection, offshore infrastructure monitoring, cable laying on the sea floor, aquaculture monitoring, and floating waste collection near harbors, among others.
	USVs with autonomous capabilities can perform such tasks and eliminate human danger in harsh marine environments.
	
	Most of the problems' solutions narrow down to path-following or trajectory-tracking requirements to meet the specific mission objectives. However, the path or the trajectory can be of any arbitrary shape. Therefore, in this work, we propose an integrated guidance approach for a USV that allows the vehicle to follow an arbitrary smooth path. 
	
	Many of the existing works on the path-following strategies had been employed for different vehicles such as aerial, ground, and marine crafts. The major differences lie in the nature of the forces acting on the system and in the degrees of freedom of motion. One major distinction between marine and aerial vehicles is the effect of added mass, which is negligible in aerial vehicles because air density is much lower than water density. Also, aerial applications require a three-dimensional strategy, whereas ground and USV applications require a two-dimensional strategy.
	In \cite{2013_ecc_pbsujit_ecc,10.1109/MCS.2013.2287568}, various path-following strategies for unmanned aerial vehicles (UAVs) had been presented.
	The existing approaches to path-following include the carrot chasing algorithm, vector field-based path-following \cite{nelson2007vector,10.1016/j.oceaneng.2015.12.037}, and pure pursuit and line-of-sight based
	path-following \cite{breivik2007applying,fossen2003line,1582226, 10.1109/TCST.2014.2338354, 10.1109/TCST.2023.3259819,10.1109/JOE.2024.3483328}. Model predictive control (MPC) based strategies were also applied to the path-following in \cite{10.1016/j.oceaneng.2009.10.004}. In \cite{10.1109/JOE.2021.3059210}, the authors proposed a LOS-based law with a time-varying look-ahead distance. In these approaches, a desired heading was first found out, and then suitable design methods were used to nullify the heading error. The most common of these methods was different forms of proportional-integral-derivative controllers. However, in all cases, the paths were treated as compositions of waypoints, lines, and circles. Note that, in general, a path can be broken down into the form of lines and arcs of circles of different radii, but in order to get a sufficiently smooth path, there may be too many waypoints and circular arcs. 
	
	The authors in \cite{kumar2024robust} develop a path-following strategy for UAVs using pursuit guidance. 
	However, it only considered point-mass kinematics and used both linear and angular speeds as control inputs. 
	In \cite{kumar2025robust}, the authors presented a robust trajectory tracking algorithm by using a two-loop-based method to control the motion of a quadrotor. 
	In \cite{10.1109/TCST.2023.3259819}, the authors design an adaptive guidance law by first designing a desired heading based on the cross-track error and then a control law, which makes it a two-loop-based strategy. 
	However, in our work, we consider the nonlinear dynamics of the USV and develop an integrated approach that directly designs the control inputs (that is, surge thrust and yaw torque) for the USV.
	This integrated guidance and control approach offers a significant advantage over conventional two-loop-based guidance and control, as evident from \cite{shima2006igc}.  
	The two-loop-based control is based on the premise that the inner loop is faster than the outer loop \cite{beard2012small}. 
	However, this integrated approach inherently takes into account the system's dynamics, thereby improving the performance.
	
	Actuator saturation is another crucial problem that is often overlooked while designing control strategies. All practical actuators can produce only a bounded thrust/torque. The works in \cite{10.1016/j.oceaneng.2024.116956,yu2019elos,10.1016/j.oceaneng.2024.118217} propose LOS-based guidance laws that account for input saturation. Numerous other works that account for input saturation while designing path-following strategies are \cite{10.1016/j.oceaneng.2022.112327, ZHENG2016158,10.1109/JOE.2021.3059210}. The approaches presented in these works approximate the saturation function using smooth hyperbolic tangent or sigmoid functions. Moreover, they only consider symmetric bounds on the actuators. However, in practice, the actuators can have different capabilities in forward and reverse directions, due to cost-cutting measures or asymmetric wear and tear resulting from usage.
	
	In this work, we present a two-dimensional path-following integrated guidance strategy for USVs, that also accounts for the nonlinear dynamics of the USV and enforces asymmetric actuator constraints. We draw inspiration from the guidance strategies \cite{zarchan2012tactical} in the interceptor guidance literature. However, there is a significant deviation, as the interceptor is assumed to be moving with a constant speed, whereas in our case, the USV's speed is considered to be time-varying. In light of the above discussions, the main contributions of this paper are highlighted next:
	\begin{itemize}
		\item The proposed work presents path-following strategies for any sufficiently smooth path, unlike in \cite{10.1109/TCST.2014.2338354}, which breaks the path into lines and circles and then designs a strategy to follow the path.
		\item First, a strategy is proposed without consideration of actuator constraints in the design, with physical actuator bounds subsequently enforced by saturating the control demands at their maximum and minimum bounds during its implementation.
		\item Second, this work also incorporates a smooth input saturation model in the design to account for asymmetric actuator constraints. This ensures that the control input demand never exceeds its bounds, resulting in a smooth control profile that extends the actuator's life while guaranteeing the system stability.
		\item Instead of using a point mass kinematic model (as in \cite{kumar2024robust}), we consider the complete nonlinear dynamics of the USV as well as the nonlinear engagement kinematics to derive the control law, which remains valid for a wide range of operating conditions. This proposed scheme is designed for an underactuated vehicle, a common scenario in many practical applications.
		\item The proposed scheme employs an integrated guidance and control (IGC) approach, which offers significant advantages over traditional two-loop guidance strategies \cite{shima2006igc}. Additionally, we utilize sliding mode control, which inherently provides robustness against matched uncertainties.
		\item Though the philosophy of the proposed scheme is inspired by interceptor guidance literature, this work deviates significantly as it accounts for time-varying speeds of the USV and is capable of following the path with time-varying speeds over the path. 
	\end{itemize}
	Apart from rigorously proving the system's stability under the proposed scheme, it has also been validated through numerical simulations for various types of paths.
	
	The rest of the paper is organized as follows: \Cref{sec:problem} presents the USV dynamics and the engagement kinematics of the USV with a virtual reference point moving along the path, followed by the problem formulation. The derivation of the necessary controller for the vehicle, along with its stability proof, will be presented in \Cref{sec:main_results}, after briefly discussing the necessary preliminaries. \Cref{sec:simulation} demonstrates the simulation results validating the efficacy of the proposed control design. \Cref{sec:conclusion} concludes the work along with indications of some possible future research directions.
	
	\section{Problem Formulation}\label{sec:problem}
	In this section, the reference frames are defined first, and then the equations of motion are described accordingly. Following this, the planar engagement scenario is presented with respect to the current USV position and the virtual reference point moving along the desired path. The engagement kinematics is described in terms of distance from the virtual reference point on the path, $\mathcal{P}$, and the orientation of the line joining them. Towards the end of the section, we formally describe the problem addressed in this paper.
	
	\subsection{USV dynamics}
	This subsection outlines the dynamics of the USV. Typically, the dynamics are described by six degrees of freedom (DOF) ordinary differential equations of motion. These DOFs are referred to as surge, sway, and heave for the positions, and roll, pitch, and yaw for the spatial orientation of the vehicle. Here, we assume that the motion of the USV is constrained to a plane, that is, the vessel can only translate in the $x-y$ plane and rotate about the axis perpendicular to the plane called the yaw. This assumption is reasonable because the USVs are designed to be metacentrically stable with small amplitudes of roll, pitch, and their rates. Hence, the roll and pitch dynamics can be ignored. Similarly, the heave-direction motion can also be neglected, as the USV is floating on the water surface with zero mean. Hence, USV dynamics can be written in the horizontal plane as a three-DOF model. 
	\begin{figure}[ht]
		\centering
		\includegraphics[width=0.5\linewidth]{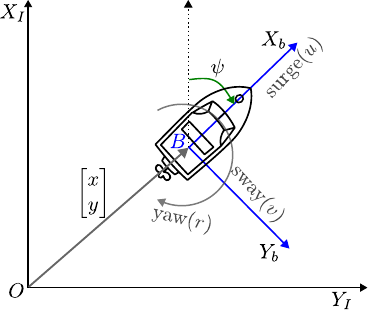}
		\caption{Schematic representation of the surface vessel's position, orientation, and body-frame velocities.}
		\label{fig:ref_frame}
	\end{figure}
	
	To describe the position and orientation of the USV, we consider two reference frames as shown in \Cref{fig:ref_frame}. One is Earth-fixed frame $(OX_IY_I)$, and the other is body-fixed frame $(BX_{b}Y_{b})$. The origin of the Earth-fixed frame is at $O$, while the origin of the body-fixed frame is at $B$. The axes of the Earth-fixed frame, $OX_I$ and $OY_I$, are directed towards the North and the East, respectively. The body-fixed frame axes $BX_b$ point from the aft to the fore, and $BY_b$ is towards the starboard side. Since the $x-y$ plane of both the frames are the same, the two frames are related by a rotation matrix, $\pmb{J}$, given by
	\begin{align}
		\pmb{J}(\psi) = \begin{bmatrix}
			\cos\psi & -\sin \psi &0\\
			\sin\psi & \cos\psi & 0 \\
			0 & 0 & 1
		\end{bmatrix},
	\end{align}
	where $\psi$ is the rotation angle relating the two frames. The vector $\pmb{\eta} = [x~~y~~\psi]^\top$ describes the position $(x,~y)$ and the heading angle $(\psi)$ of the surface vessel in the Earth-fixed frame. The vector $\pmb{\nu} = [u ~~ v ~~ r]^\top$ denotes the velocity vector components of the surface vessel in the body-fixed frame. The variable $u$ is the linear velocity in the surge, $v$ is the linear velocity in sway, and $r$ is the yaw rate of the surface vessel.  Now, the three-DOF non-linear equation of motion can be expressed, \cite{yang2014_dyamics, fossen2021handbook}, as
	\begin{align}\label{eqn: kinematics}
		\dot{\pmb{\eta}}  &= \pmb{J}(\psi) \pmb{\nu},
		\\
		\label{eqn: dynamics}
		\dot{\pmb{\nu}} &= \pmb{M}^{-1}\left[\pmb{\tau} - \pmb{C}(\pmb{\nu})\pmb{\nu} - \pmb{D}(\pmb{\nu})\pmb{\nu}  + \pmb{b}\right],
	\end{align}
	where $\pmb{\tau} = \begin{bmatrix}
		\tau_1& \tau_2& \tau_3    
	\end{bmatrix}^\top=\begin{bmatrix}
		\tau_u& \tau_v& \tau_r  
	\end{bmatrix}^\top$ is the control input in the body-fixed frame, $\pmb{M}\in \mathbb{R}^{3\times3}$ is a positive definite mass inertia matrix, $\pmb{C}\in \mathbb{R}^{3\times3}$ is the Coriolis-centripetal matrix, $\pmb{D}\in \mathbb{R}^{3\times3}$ is the damping matrix, and $\pmb{b}\in \mathbb{R}^{3}$ is the disturbance vector. The expressions for these matrices can be written as 
	\begin{align}\label{eqn: MCD}
		\nonumber  \pmb{M} &= 
		\begin{bmatrix}
			m_{11} &0 &0\\
			0 &m_{22} &m_{23} \\
			0  &m_{32} &m_{33}
		\end{bmatrix}, 
		\quad 
		\pmb{C}(\pmb{\nu}) = 
		\begin{bmatrix}
			0 &0 &-m_{22}v-m_{23}r\\
			0 &0 &m_{11}u \\
			m_{22}v+m_{23}r  &-m_{11}u &0
		\end{bmatrix},\\
		\pmb{D}(\pmb{\nu}) &= 
		\begin{bmatrix}
			d_{11}(u) &0 &0\\
			0 &d_{22}(v,r) &d_{23}(v,r) \\
			0  &d_{32}(v,r) &d_{33}(v,r)
		\end{bmatrix}. 
	\end{align}
	The expressions for the matrix elements used in \Cref{eqn: MCD} are given by
	\begin{subequations}\label{eqn: dyn_elements}
		\begin{align}
			m_{11} &= \text{m} - X_{\dot{u}},
			m_{22} = \text{m} -  Y_{\dot{v}},\\
			m_{23} &= \text{m}x_g -  Y_{\dot{r}},
			m_{32} = \text{m}x_g -  N_{\dot{v}},\\
			m_{33} &= I_z -  N_{\dot{r}},
			d_{11}(u) = -X_u - X_{|u|u}|u|,\\
			d_{22}(v,r) &= -Y_v - Y_{|v|v}|v| - Y_{|r|v}|r|,\\
			d_{23}(v,r) &= -Y_r - Y_{|v|r}|v| - Y_{|r|r}|r|,\\
			d_{32}(v,r) &= -N_v -N_{|v|v}|v| - N_{|r|v}|r|,\\
			d_{33}(v,r) &= -N_r - N_{|v|r}|v| - N_{|r|r}|r|,
		\end{align}  
	\end{subequations}
	where $\text{m}$ is the mass of the surface vessel, $x_g$ is the distance of the geometric center of the surface vessel from the center of gravity (CG), and $I_z$  is the yaw moment of inertia of the surface vessel. It is to be noted that matrices $\pmb{M}$ and $\pmb{C}$ contain the components also related to the added mass. The remaining symbols are related to hydrodynamic derivatives and have the usual meanings, as per the SNAME convention \cite{SNAME1950nomenclature}. 
	
	The kinetics of the USV in \Cref{eqn: dynamics} can be expanded with the help of \Cref{eqn: MCD,eqn: dyn_elements} as
	\begin{subequations}
		\begin{align}
			\dot{u} &= \dfrac{m_{22}}{m_{11}}v r + \dfrac{m_{23}}{m_{11}}r^2 - \dfrac{d_{11}}{m_{11}}u + \dfrac{\tau_u}{m_{11}},\label{eqn: u__dot}\\
			\dot{v} &= -\dfrac{m_{23}}{m_{22}}\dot{r} -\dfrac{m_{11}}{m_{22}}ur -\dfrac{d_{22}}{m_{22}}v -\dfrac{d_{23}}{m_{22}}r, \label{eqn:v__dot}\\
			\dot{r} &= -\dfrac{m_{32}}{m_{33}}\dot{v} + \dfrac{m_{11} - m_{22}}{m_{33}}vu -\dfrac{m_{23}}{m_{33}}ur -\dfrac{d_{32}}{m_{33}}v -\dfrac{d_{33}}{m_{33}}r + \dfrac{\tau_r}{m_{33}}  \label{eqn:r__dot}.
		\end{align}
	\end{subequations}
	Note that since the USV is assumed to be underactuated, the control input in the sway direction, $\tau_v$, is substituted with zero as there is no independent control thruster available in that particular channel. Hence, the control input vector is represented as $\pmb{\tau} = [\tau_u~~0~~\tau_r]^\top$. However, while deriving the control inputs, we also refer to $\pmb{\tau}$ as $[\tau_u~~\tau_r]^\top$.
	
	\subsection{Engagement kinematics}
	\begin{figure}[htpb]
		\centering
		\includegraphics[width=.6\linewidth]{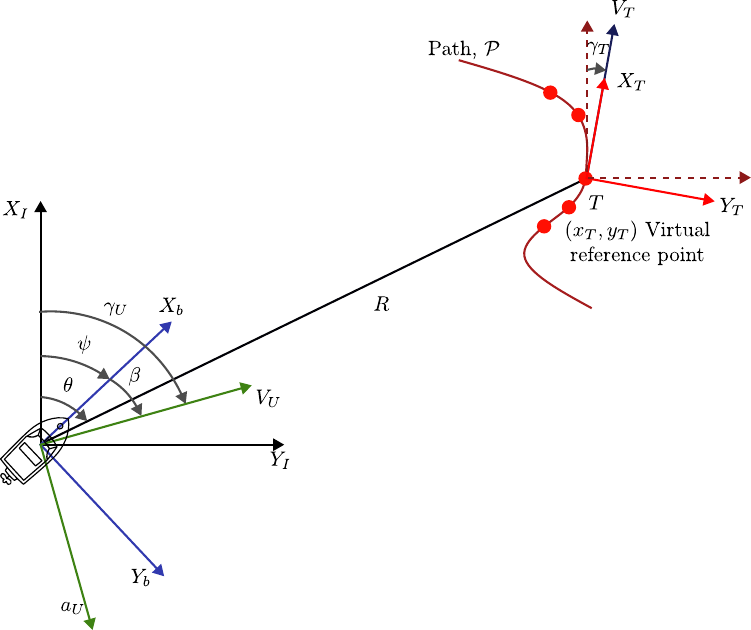}
		\caption{Kinematic engagement of USV with virtual reference point, $T$, on the path, $\mathcal{P}$.}
		\label{fig:engagement}
	\end{figure}
	Now, to pose the path-following problem similar to the interceptor guidance problem, consider the engagement of the USV as shown in \Cref{fig:engagement}. The dynamics for the relative motion of the USV with respect to the desired virtual reference point can be described in terms of the rate of change of range to go, line-of-sight (LOS) rotation rate, and the rotation rate of the USV's velocity vector, which can be written as
	\begin{align}
		\dot{R} &= V_T\cos\theta_T - V_U \cos \theta_U \label{eqn:r_dot},\\
		R\dot{\theta} &= V_T\sin\theta_T - V_U \sin \theta_U \label{eqn:r_theta_dot}, \\
		\dot{\gamma}_U &= \dfrac{a_U}{V_U} \label{eqn:gM_dot},\\
		\dot{\gamma}_T &= \dfrac{a_T}{V_T}, \label{eqn:gT_dot}
	\end{align}
	where $\theta_U=\gamma_U-\theta$, and  $\theta_T=\gamma_T-\theta$. Here, $R$ is the range to the desired virtual reference point moving on the path, $\theta$ is the LOS angle, $\beta$ is the side slip angle, $V_U$ is the total linear speed of USV, $a_U$ is the lateral acceleration perpendicular to the velocity of the USV, and $\gamma_U$ is the course angle of the USV. Since the virtual reference point, $T$, is moving along the path, the velocity of the virtual point is denoted with $V_T$ along the tangent to the path. The variable $\gamma_T$ denotes the course angle of the moving virtual reference point with respect to the North. Furthermore, it is crucial to state some of the key assumptions and definitions before formally stating the problem addressed in this work.
	
	\begin{assumption}
		The vehicle is assumed to be underactuated with known system dynamics, and no external disturbances are present. 
	\end{assumption}
	
	\begin{assumption}
		Range and LOS information from the desired virtual reference point, the path, $\mathcal{P}$, are available at all times to the control system. In other words, the path information is known or user-defined.
	\end{assumption}
	
	Both assumptions are reasonable from the point of view of application to real-life scenarios. Also, the proposed strategy uses a sliding mode-based control strategy, which is robust to matched disturbances. 
	
	\subsection{Input saturation model with asymmetric bounds }
	Real-world systems often face actuator saturation. In many cases, their capabilities differ in forward and reverse directions, such as the forward and reverse thrust of a USV. While \cite{10.1049/iet-cta.2016.1097} uses a symmetric model with uniform bounds, our previous work \cite{verma2025trajectory} extends this approach to account for asymmetric saturation as follows:
	\begin{equation}\label{eqn:actuator}
		\dot{\pmb{\zeta}} = \pmb{Q}(\pmb{I} - \pmb{G}_M) \pmb{\tau}_c - \pmb{\rho} \pmb{\zeta} + (\pmb{I}-\pmb{Q})(\pmb{I} - \pmb{G}_m) \pmb{\tau}_c,~~\pmb{\tau} = \pmb{\zeta}
	\end{equation}
	where $\pmb{\zeta}(t=0) = \pmb{0}$, $\pmb{\tau}_{c}$, $\pmb{\tau}$, and $\pmb{\zeta} \in \mathbb{R}^3$ are the input, the output, and the state of the actuator saturation model, respectively. The term  $\pmb{\rho} = \diag [\rho_1~~ \rho_2 ~~ \rho_3] $ where $\rho_1, \rho_2, \rho_3 > 0 $ are the design constants. The matrices $\pmb{I},\pmb{G}_M, \pmb{G}_m \in \mathbb{R}^{3 \times 3} $, where $\pmb{I}$ is the identity matrix, $\pmb{G}_M=\diag [ g_{1M} ~~ g_{2M} ~~ g_{3M}]$, where the terms $g_{iM} = \left(\dfrac{\tau_i}{\tau_{iM}}\right)^n\,\forall~i=\{1,2,3\}$, and $\pmb{G}_m=\diag [g_{1m} ~~ g_{2m} ~~ g_{3m}]$, where the terms $g_{im} = \left(\dfrac{\tau_i}{\tau_{im}}\right)^n\,\forall~i=\{1,2,3\}$ and $n\geq 2$ being an even integer. Here, $\tau_{iM}$ and $\tau_{im}$ are the actuator bounds in forward and reverse directions. The matrix $\pmb{Q} = \diag [q_1 ~~q_2~~ q_3]$ helps in enforcing the asymmetric actuator saturation bounds, where 
	\begin{equation}\label{eqn:q_zeta}
		q_i (\zeta_i) = \begin{cases}
			1, & \text{if $\zeta_i$} > 0\\
			0, & \text{$\zeta_i$} \leq 0
		\end{cases}.
	\end{equation}
	The actuator saturation bounds in forward and reverse directions are $\tau_{iM}$ and $\tau_{im}$, respectively. The input saturation model in \Cref{eqn:actuator} ensures that $\pmb{\tau}$ satisfies the actuator bounds while $\pmb{\tau}_c$ can exceed the actuator bounds but is bounded by some finite positive value (see \cite{verma2025trajectory} for the proof). 
	
	Given the assumptions above, we now formally state the problem addressed in this work.
	
	\begin{problem}\label{prob: main} 
		Given the equation of motion of the USV as described by \Cref{eqn: kinematics,eqn: dynamics} and the engagement scenario of the USV with the desired virtual reference point, $T$, as depicted in \Cref{fig:engagement}, the objective of this work is to design a nonlinear integrated guidance and control law for a USV to steer the vehicle to a specified path, $\mathcal{P}$. Mathematically, these objectives translate to 
		\begin{equation*}
			x_T(\Theta) - x(t) \rightarrow 0,~  y_T(\Theta) - y(t) \rightarrow 0,
		\end{equation*}
		as $t \rightarrow \infty$, where variable $\Theta$ is used to parameterize the path.
		Alternatively, in terms of relative motion between a virtual target along the path and the USV position, the objective of path following can be achieved by applying the control inputs to the USV such that
		\begin{equation*}
			\theta_U \rightarrow 0, ~~ R\rightarrow 0, ~~ \dot{R}\rightarrow 0,
		\end{equation*}
		as the time progresses. 
	\end{problem}
	Here, controlling the angle $\theta_U$ to zero will make the USV directly pursue the virtual target $\left(x_T(\Theta),y_T(\Theta)\right)$. Nullification of the range $R$ implies that the USV has reached the path. Additionally, it is crucial to emphasize that merely nullifying $R$ may not be sufficient, as the USV must not only reach the path but also stay on it. Therefore, nullifying $\dot{R}$ is critical. 
	
	To address this problem, we proposed two approaches wherein the first approach we design control inputs without accounting for actuator bounds and later impose the bounds on control inputs using a saturation block. On the other hand, the second approach incorporates the control bounds in the design itself using \Cref{eqn:actuator}. This will allow us to provide a mathematical guarantee of the overall system's stability, even in the presence of actuator constraints.
	
	\section{Main results}\label{sec:main_results}
	In this section, we derive a control strategy to achieve path-following in planar scenarios. The approach is to rendezvous with a virtual reference point moving along the path, resulting in path-following behavior. To do so, we use a pure-pursuit-based strategy. The velocity vector of the USV is first aligned with LOS, and then the relative range and its rate are nullified. Since we formulated the path-following problem as a guidance problem, the commonly used steering command in the interceptor guidance literature is the normal acceleration, $a_U$.  Therefore, we first discuss the equation relating the lateral acceleration of SV to the acceleration of the USV represented in the USV's body frame. This relation was established in our previous work \cite{10.2514/1.G009394, rmv_tpn_acc}, which we now present here as well for completeness.
	
	\begin{lemma}
		The acceleration of USV perpendicular to LOS is related to the yaw rate $r$ and acceleration in body frame $\dot u$ and $\dot v$ with the relation given by 
		\begin{equation}\label{eqn:aM}
			a_U = (-\dot{u}\sin \beta + \dot{v}\cos \beta)+r V_U,
		\end{equation}
		where $\beta$ is the sideslip angle of the USV. 
	\end{lemma}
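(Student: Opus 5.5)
The plan is to express the USV velocity in polar form and differentiate. From \Cref{eqn: kinematics}, the inertial velocity is $\dot x = u\cos\psi - v\sin\psi$, $\dot y = u\sin\psi + v\cos\psi$. I will define the total speed $V_U = \sqrt{u^2+v^2}$ and the sideslip angle $\beta$ through $u = V_U\cos\beta$, $v = V_U\sin\beta$, that is, $\beta = \operatorname{atan2}(v,u)$, assuming $V_U>0$ so that $\beta$ is well defined. Substituting gives $\dot x = V_U\cos(\psi+\beta)$ and $\dot y = V_U\sin(\psi+\beta)$, so the course angle in \Cref{fig:engagement} is $\gamma_U = \psi + \beta$. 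Since $\dot\psi = r$ from the third row of \Cref{eqn: kinematics}, we get $\dot\gamma_U = r + \dot\beta$.

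The lateral acceleration is the component of inertial acceleration normal to the velocity vector. By \Cref{eqn:gM_dot}, $a_U = V_U\dot\gamma_U = V_U r + V_U\dot\beta$. It therefore suffices to show
\[
V_U\dot\beta = -\dot u\sin\beta + \dot v\cos\beta .
\]
To obtain this, differentiate $\tan\beta = v/u$ (or equivalently $\beta=\operatorname{atan2}(v,u)$), which gives $\dot\beta = (u\dot v - v\dot u)/(u^2+v^2)$. Substituting $u = V_U\cos\beta$ and $v = V_U\sin\beta$ yields $\dot\beta = (\dot v\cos\beta - \dot u\sin\beta)/V_U$. Multiplying by $V_U$ and adding $rV_U$ gives \Cref{eqn:aM}.

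As a cross-check, I will compute the inertial acceleration vector directly by differentiating $(\dot x,\dot y)$ and projecting it onto the unit normal $(-\sin\gamma_U,\cos\gamma_U)$. Here the rotating-frame term contributes $r(u\cos\beta + v\sin\beta) = rV_U$, and the $\dot u,\dot v$ terms project onto the normal as $-\dot u\sin\beta + \dot v\cos\beta$.

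The only delicate point is conventions rather than computation. I must confirm that $\beta$ is measured from the body $x$-axis to the velocity vector with the same sign convention as $\psi$, so that $\gamma_U = \psi+\beta$ and not $\psi-\beta$. I must also note the nondegeneracy requirement $V_U\neq 0$. I will also remark that the lemma's phrase ``perpendicular to LOS'' should be read as perpendicular to the USV velocity, consistent with \Cref{eqn:gM_dot}; the two coincide once $\theta_U = 0$.
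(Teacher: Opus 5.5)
Your proposal is correct and follows the paper's proof essentially step for step: you set $\gamma_U=\psi+\beta$, differentiate $\beta=\tan^{-1}(v/u)$ to get $\dot\beta=(u\dot v-\dot u v)/V_U^2$, substitute $\sin\beta=v/V_U$ and $\cos\beta=u/V_U$, and then use $a_U=V_U\dot\gamma_U$ from \Cref{eqn:gM_dot}. Beyond the paper, your projection cross-check, the nondegeneracy condition $V_U\neq 0$, and the reading of ``perpendicular to LOS'' as perpendicular to the velocity (which is how $a_U$ is actually defined with \Cref{eqn:gM_dot}) are all correct.
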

	\begin{proof}
		To compute USV's acceleration $a_U$, let us express the course angle of USV in terms of sideslip angle ($\beta$) and heading angle ($\psi$).  As illustrated in \Cref{fig:engagement}, the course angle of the USV is given by $\gamma_U = \beta + \psi,$
		which, on differentiating with respect to time, gives
		\begin{equation}\label{eqn: gamma_M_dot}
			\dot{\gamma}_U = \dot{\beta} + \dot{\psi} = \dot{\beta} + r.
		\end{equation}
		Furthermore, the sideslip angle can be obtained as 
		\begin{equation}
			\beta = \tan^{-1}\left(\dfrac{v}{u}\right),
		\end{equation}
		which, on taking the time derivative, results in
		\begin{equation}\label{eqn: beta_dot}
			\dot{\beta} = \dfrac{u\dot{v} - \dot{u}v}{u^2+ v^2} = \dfrac{u\dot{v} - \dot{u}v}{V_U^2}.
		\end{equation}
		Now, on replacing the expression of $\dot{\beta}$ from \Cref{eqn: beta_dot} in \Cref{eqn: gamma_M_dot}, we get
		\begin{align}
			\dot{\gamma}_U = \dfrac{u\dot{v} - \dot{u}v}{V_U^2} + r = \dfrac{1}{V_U}\left(-\dot{u}\dfrac{v}{V_U} + \dot{v}\dfrac{u}{V_U}\right)+r.
		\end{align}
		Noting that $\sin \beta = v/V_U$ and $\cos \beta = u/V_U$ (see \Cref{fig:engagement}), one can rewrite the course rate equation as
		\begin{equation}\label{eqn:gamma_M_dot}
			\dot{\gamma}_U = \dfrac{1}{V_U}\left(-\dot{u}\sin \beta + \dot{v}\cos \beta\right)+r.
		\end{equation}
		Finally, using \Cref{eqn:gM_dot,eqn:gamma_M_dot}, one may reach at \Cref{eqn:aM}. This completes the proof.  
	\end{proof}
	
	Now, we proceed to reorganize the dynamical equations and rewrite them in a compact form. Due to the coupling between the sway and yaw channels, we can control the sway rate even without a sway thruster. To do so, the dynamical equations can be re-derived by substituting $\dot{r}$ from \Cref{eqn:r__dot} into \Cref{eqn:v__dot}, 
	and on eliminating $\dot{r}$, we obtain
	\begin{align}\label{eqn:v__dot2}
		\nonumber \dot{v} &= -\dfrac{m_{23}}{m_{22}}\left( -\dfrac{m_{32}}{m_{33}}\dot{v} + \dfrac{m_{11} - m_{22}}{m_{33}}vu -\dfrac{m_{23}}{m_{33}}ur -\dfrac{d_{32}}{m_{33}}v -\dfrac{d_{33}}{m_{33}}r + \dfrac{\tau_r}{m_{33}} \right) \\ 
		& \quad
		-\dfrac{m_{11}}{m_{22}}ur -\dfrac{d_{22}}{m_{22}}v -\dfrac{d_{23}}{m_{22}}r.
	\end{align}
	On collecting all $\dot{v}$-related terms, we get 
	\begin{align}\label{eqn: v__dot1_}
		\dot{v} &= \dfrac{-m_{23} \left[ (m_{11}-m_{22})vu - m_{23}ur - d_{32}v -d_{33}r + \tau_r\right] }{m_{22}m_{33}-m_{23}m_{32}}
		+ 
		\dfrac{m_{33}\left[ 
			-m_{11}ur -d_{22}v -d_{23}r
			\right]}{m_{22}m_{33}-m_{23}m_{32}}
	\end{align}
	For notational convenience and facilitate the subsequent controller design, we write \Cref{eqn: u__dot,eqn: v__dot1_} as 
	\begin{subequations}
		\begin{align}
			\dot{u} = f_u(\pmb{\nu}) + g_u \tau_u, \label{eqn:udot_compact}\\
			\dot{v} = f_v(\pmb{\nu}) + g_v \tau_r,\label{eqn:vdot_compact}
		\end{align}
	\end{subequations}
	where the terms $f_u(\pmb{\nu})$, $g_u$, $f_v(\pmb{\nu})$, and $g_v$ are given by 
	\begin{subequations}
		\begin{align*}
			f_u(\pmb{\nu}) &= \dfrac{m_{22}}{m_{11}}v r + \dfrac{m_{23}}{m_{11}}r^2 - \dfrac{d_{11}}{m_{11}}u, \quad
			g_u =\dfrac{1}{m_{11}}, \quad  g_v = \dfrac{-m_{23}}{m_{22}m_{33}-m_{23}m_{32}},\\
			f_v(\pmb{\nu}) &= \dfrac{-m_{23}\left[ (m_{11}-m_{22})vu - m_{23}ur  - d_{32}v -d_{33}r\right]}{m_{22}m_{33}-m_{23}m_{32}} 
			+\dfrac{m_{33}\left[ 
				-m_{11}ur -d_{22}v -d_{23}r
				\right]}{m_{22}m_{33}-m_{23}m_{32}}.
		\end{align*}
	\end{subequations}
	Now, we have established the relation between the lateral acceleration ($a_U$) of the USV and the body rates. And from the dynamics of the USV, the body rates are related to the control forces and moments. This facilitates the integrated guidance and control of the USV, which is presented next.
	\subsection{Integrated guidance and control design} \label{sec:IGC}
	In this subsection, we derive the control inputs required to steer the vehicle towards the desired path. Any arbitrary and smooth path can be seen as the locus of virtual target positions. The guidance law should be designed to steer the USV to rendezvous with the virtual target as it moves along the path. This, in turn, will lead to the path-following behavior. Therefore, to achieve our objective, let us consider two sliding surfaces, given by
	\begin{align}
		\label{eqn:Stheta}\mathcal{S}_{\theta_U} &= \theta_U = \gamma_U- \theta, \\
		\label{eqn:SR}\mathcal{S}_R &= \dot{R}+k_R R. 
	\end{align}
	As the value of $\mathcal{S}_{\theta_U}$ approaches zero, the USV will be moving along the LOS. Also, $\mathcal{S}_R$ going to zero will lead to a rendezvous with the virtual reference point. 
	
	Consider a Lyapunov function candidate as $\mathcal{V}_{\theta_U} = 1/2 \mathcal{S}_{\theta_U}^2$. Differentiating $\mathcal{V}_{\theta_U}$ with respect to time gives
	\begin{equation}\label{eqn:V_thetaU_dot}
		\dot{\mathcal{V}}_{\theta_U} = {\mathcal{S}}_{\theta_U}\dot{\mathcal{S}}_{\theta_U} = {\mathcal{S}}_{\theta_U}(\dot{\gamma}_U - \dot{\theta}).
	\end{equation}
	By using $\dot{\theta}$ and $\dot{\gamma}_U$ from \Cref{eqn:r_theta_dot,eqn:gM_dot} into \Cref{eqn:V_thetaU_dot}, $\dot{\mathcal{S}}_{\theta_U}$ can be expressed as 
	\begin{align}\label{eqn:V_thetaM_dot_1}
		\dot{\mathcal{S}}_{\theta_U} &= \left( \dfrac{a_U}{V_U} -\dot{\theta}
		\right) = \dfrac{1}{V_U}\left(-\dot{u}\sin \beta + \dot{v}\cos \beta\right)+r -\dot{\theta} .
	\end{align}
	Further substituting body rates from \Cref{eqn:udot_compact,eqn:vdot_compact}, one can write $\dot{\mathcal{S}}_{\theta_U}$ as 
	\begin{align}
		\dot{\mathcal{S}}_{\theta_U} 
		&= \dfrac{-(f_u(\pmb{\nu}) + g_u \tau_r)\sin \beta + (f_v(\pmb{\nu}) + g_v \tau_r)\cos \beta}{V_U}+r -\dot{\theta} . \label{eqn:S_thetaUdot}
	\end{align}
	Equation \Cref{eqn:S_thetaUdot} can now be rewritten in compact form as 
	\begin{equation}
		\dot{\mathcal{S}}_{\theta_U} = f_{\theta_U} + g_{11}\tau_u + g_{12}\tau_r, \label{eqn:S_thetaU_dot_compact}
	\end{equation}
	where the terms $f_{\theta_U}$, $g_{11}$, and $g_{12}$ are defined as 
	\begin{align*}
		f_{\theta_U} &= \dfrac{-f_u(\pmb{\nu})\sin \beta + f_v(\pmb{\nu})\cos \beta}{V_U} + r-\dot{\theta},\quad
		g_{11} = -\dfrac{g_u \sin \beta}{V_U},\quad
		g_{12} = \dfrac{g_v \cos \beta}{V_U}.
	\end{align*}
	Next, we consider the Lyapunov function candidate as $\mathcal{V}_{R} = 1/2 \mathcal{S}_{R}^2$. On taking time differentiation of $\mathcal{V}_{R}$ yields
	\begin{equation}
		\dot{\mathcal{V}}_{R} = {\mathcal{S}}_{R}\dot{\mathcal{S}}_{R} = {\mathcal{S}}_{R}(\ddot{R} + k_R\dot{R}).\label{eqn:VR_dot}
	\end{equation}
	To find the dynamics of the range rate, $\dot{R}$, we differentiate \Cref{eqn:r_dot} with respect to time, which yields
	\begin{equation}\label{eqn:R_ddot}
		\ddot{R} = \dot{V}_T \cos\theta_T - V_T \sin \theta_T \dot{\theta}_T - 
		\dot{V}_U \cos\theta_U + V_U \sin \theta_U \dot{\theta}_U.
	\end{equation}
	Since \Cref{eqn:R_ddot} depends on $\dot{V}_U$ and $\dot{\theta}_U$, their respective dynamics must be derived. 
	Using the definition of magnitude of the USV's velocity, we know that $V_U^2 = u^2 + v^2$, and to find out $\dot{V}_U$, we differentiate $V_U$ with time as
	\begin{align}
		\nonumber \dot{V}_U V_U = u \dot{u} + v \dot{v}  \implies \dot{V}_U &= \dfrac{u}{V_U} \dot{u} + \dfrac{v}{V_U} \dot{v} , 
	\end{align}
	which leads to 
	\begin{align}
		\dot{V}_U =\dot{u}\cos \beta + \dot{v} \sin \beta.\label{eqn:VU_dot}
	\end{align}
	To derive the dynamics of $\theta_U$, differentiating it with respect to time yields, $\dot{\theta}_U = \dot{\gamma}-\dot{\theta}$. Using  \Cref{eqn:gamma_M_dot,eqn:gM_dot}, we can write
	$V_U\dot{\theta}_U$ as 
	\begin{align}
		V_U\dot{\theta}_U ={a_U} - \dot{\theta}V_U = (-\dot{u}\sin \beta + \dot{v}\cos \beta)+r V_U - \dot{\theta}V_U \label{eqn:VU_thetaU_dot}
	\end{align}
	Substituting \Cref{eqn:VU_dot,eqn:VU_thetaU_dot} into \Cref{eqn:R_ddot}, $\ddot{R}$ can be obtained as
	\begin{align}
		\nonumber \ddot{R} &= \dot{V}_T \cos\theta_T - V_T \sin \theta_T \dot{\theta}_T  - 
		\left(\dot{u}\cos \beta + \dot{v} \sin \beta \right) \cos\theta_U \\
		& \quad + \sin \theta_U \left((-\dot{u}\sin \beta + \dot{v}\cos \beta)+r V_U - \dot{\theta}V_U\right) .\label{eqn:R_ddot1}
	\end{align}
	Finally, using \Cref{eqn:VR_dot,eqn:R_ddot1}, we get the expression for $\dot{\mathcal{S}}_R$ as 
	\begin{align}
		\nonumber \dot{\mathcal{S}}_R &=   k_R\dot{R} + \dot{V}_T \cos\theta_T - V_T \sin \theta_T \dot{\theta}_T - 
		\left(\dot{u}\cos \beta + \dot{v} \sin \beta \right) \cos\theta_U  \\
		\nonumber & \quad +  \sin \theta_U \left(-\dot{u}\sin \beta + \dot{v}\cos \beta+r V_U - \dot{\theta}V_U\right)   \\
		\nonumber &= k_R\dot{R} + \dot{V}_T \cos\theta_T - V_T \sin \theta_T \dot{\theta}_T + rV_U\sin\theta_U\\
		& \quad -\dot{u}\left(\cos \beta \cos\theta_U + \sin \theta_U \sin \beta\right) -\dot{\theta}V_U\sin\theta_U + \dot{v}\left( -\sin \beta \cos\theta_U+\cos \beta \sin\theta_U \right).
		\label{eqn:SR_dot}
	\end{align}
	Now, using \Cref{eqn:udot_compact} and \Cref{eqn:vdot_compact}, we can write \Cref{eqn:SR_dot} as 
	\begin{align}
		\nonumber \dot{\mathcal{S}}_R &=   k_R\dot{R} + \dot{V}_T \cos\theta_T - V_T \sin \theta_T \dot{\theta}_T + rV_U\sin\theta_U  -\dot{\theta}V_U\sin\theta_U\\
		\nonumber & \quad - 
		\left(\cos \beta \cos\theta_U + \sin \theta_U \sin \beta\right) (f_u(\pmb{\nu}) + g_u \tau_u) \\
		& \quad + \left( -\sin \beta \cos\theta_U+\cos \beta \sin\theta_U \right) (f_v(\pmb{\nu}) + g_v \tau_r) \label{eqn:SR_dot1}.
	\end{align}
	For ease of representation, \Cref{eqn:SR_dot1} can also be written in compact form as 
	\begin{equation}
		\dot{\mathcal{S}}_R = g_{21}\tau_u + g_{22} \tau_r + f_R,\label{eqn:SR_dot_compact}
	\end{equation}
	where $g_{21}, g_{22}$, and $f_R$ are defined as 
	\begin{align*}
		g_{21} &= \left(\cos \beta \cos\theta_U + \sin \theta_U \sin \beta\right)g_u,\quad
		g_{22} = \left( -\sin \beta \cos\theta_U+\cos \beta \sin\theta_U \right) g_v,\\
		f_R & = k_R\dot{R} + \dot{V}_T \cos\theta_T - V_T \sin \theta_T \dot{\theta}_T -\left(\cos \beta \cos\theta_U + \sin \theta_U \sin \beta\right)f_u(\pmb{\nu}) \\
		& \quad + \left( -\sin \beta \cos\theta_U+\cos \beta \sin\theta_U \right) f_v(\pmb{\nu}) + rV_U\sin\theta_U  -\dot{\theta}V_U\sin\theta_U
	\end{align*}
	From \Cref{eqn:S_thetaU_dot_compact,eqn:SR_dot_compact}, it can be observed that both equations are coupled in terms of $\tau_u$ and $\tau_r$, which requires them to be obtained simultaneously. Towards this end, consider a combined Lyapunov function candidate as 
	\begin{equation}
		\mathcal{W} = \dfrac{1}{2}\mathcal{S}^\top \mathcal{S}, \quad \mathcal{S} =\begin{bmatrix}
			\mathcal{S}_{\theta_U} &
			\mathcal{S}_R
		\end{bmatrix}^\top
	\end{equation}
	which, on differentiating with respect to time and using \Cref{eqn:S_thetaU_dot_compact,eqn:SR_dot_compact} yields
	\begin{equation} \label{eqn: w_dot_2}
		\dot{\mathcal{W}} = \mathcal{S}^\top\dot{\mathcal{S}}= \begin{bmatrix}
			\mathcal{S}_{\theta_U} &
			\mathcal{S}_R
		\end{bmatrix}\begin{bmatrix}
			\dot{\mathcal{S}}_{\theta_U} \\
			\dot{\mathcal{S}}_{R}
		\end{bmatrix} = \begin{bmatrix}
			\mathcal{S}_{\theta_U} &
			\mathcal{S}_R
		\end{bmatrix} \begin{bmatrix} f_{\theta_U} +g_{11}\tau_u + g_{12}\tau_r\\
			f_R + g_{21}\tau_u + g_{22}\tau_r
		\end{bmatrix}.
	\end{equation}
	On rearranging \Cref{eqn: w_dot_2} and writing in a vectorized form,  
	\begin{equation}
		\dot{\mathcal{W}} = \begin{bmatrix}
			\mathcal{S}_{\theta_U} &
			\mathcal{S}_R
		\end{bmatrix} \left( \begin{bmatrix}
			f_\theta \\
			f_R
		\end{bmatrix} + \begin{bmatrix}
			g_{11} & g_{12}\\
			g_{21} & g_{22}
		\end{bmatrix}\begin{bmatrix}
			\tau_u\\
			\tau_r
		\end{bmatrix}
		\right).
	\end{equation}
	Since, we already know the expressions for $g_{11}$, $g_{12}$, $g_{21}$, and $g_{22}$ and on replacing \begin{align}
		\mathcal{F} &= \begin{bmatrix}
			f_\theta  \\
			f_R
		\end{bmatrix}, \quad \mathcal{G} = \begin{bmatrix}
			g_{11} & g_{12}\\
			g_{21} & g_{22}
		\end{bmatrix} = \begin{bmatrix}
			-\dfrac{g_u \sin \beta}{V_U} & \dfrac{g_v \cos \beta}{V_U} \\
			-g_u \left(c \beta c \theta_U + s \beta s \theta_U\right) & g_v \left(-s\beta c \theta_U + c \beta s \theta_U\right)
		\end{bmatrix},\label{eqn:FG}
	\end{align}
	one can express the term $\dot{\mathcal{W}}$ as
	\begin{equation}
		\dot{\mathcal{W}} = \begin{bmatrix}
			\mathcal{S}_{\theta_U} &
			\mathcal{S}_R
		\end{bmatrix}\begin{bmatrix}
			\mathcal{F} + \mathcal{G}\pmb{\tau}
		\end{bmatrix}.
	\end{equation}
	\begin{remark}\label{rem1}
		The determinant of $\mathcal{G}$ can be obtained as 
		\begin{align*}
			|\mathcal{G}| &= g_ug_v \sin^2 \beta \cos \theta_U + g_ug_v \cos^2 \beta \cos \theta_U -g_ug_v \cos \beta \sin \beta \sin \theta_U 
			+ g_ug_v \cos \beta \sin \beta \sin \theta_U \\
			&= g_u g_v \cos \theta_U,
		\end{align*}
		which is non-zero, except when $\theta_U = \pi/2$ as $g_u$ and $g_v$ are related to the parameters of the USV, which are typically non-zero. The case when $\theta_U=0$ corresponds to the situation when $V_U$ is perpendicular to the LOS. This might happen only momentarily during the engagement, which can be verified by computing $\dot{\theta}_U = \dot{\gamma}_U -\dot{\theta}$, and showing that it is indeed a non-zero quantity at this particular instant. 
		
	\end{remark}
	
	It is clear from the \Cref{rem1} that $\mathcal{G}$ is an invertible matrix, hence by choosing the control input vector as
	\begin{equation}\label{eqn:tau_final}
		\pmb{\tau} = -\mathcal{G}^{-1} \mathcal{F} - \mathcal{G}^{-1}\begin{bmatrix}
			\tilde{M}_\theta\sign(\mathcal{S}_{\theta_U}) + N_\theta \mathcal{S}_{\theta_U}\\
			\tilde{M}_R\sign(\mathcal{S}_R) + N_R \mathcal{S}_R
		\end{bmatrix},
	\end{equation}
	and substituting $\pmb{\tau}$ from \Cref{eqn:tau_final} into $\dot{\mathcal{W}}$, we obtain
	\begin{align}
		\nonumber \dot{\mathcal{W}} &= \begin{bmatrix}
			\mathcal{S}_{\theta_U} &
			\mathcal{S}_R
		\end{bmatrix} \begin{bmatrix}
			- \tilde{M}_\theta\sign(\mathcal{S}_{\theta_U}) - N_\theta \mathcal{S}_{\theta_U}\\
			-\tilde{M}_R\sign(\mathcal{S}_R) -  N_R \mathcal{S}_R
		\end{bmatrix}\\
		\nonumber &=- \tilde{M}_\theta\left |\mathcal{S}_{\theta_U}\right | - \tilde{M}_R\left |S_R\right | -N_\theta \mathcal{S}_{\theta_U}^2 - N_R \mathcal{S}_R^2\\
		&<0~\forall ~\mathcal{S}_{\theta_U}\ne 0, ~\mathcal{S}_R\ne 0 .
	\end{align}
	Here, the controller parameters $\tilde{M}_\theta$, $N_\theta$, $\tilde{M}_R$, and $N_R$, are all positive. The negative definiteness of $\mathcal{\dot{W}}$ guarantees that when control input $\pmb{\tau}$, as given by \Cref{eqn:tau_final}, is applied to the system, the system's states reach the sliding surface $\mathcal{S}=\pmb{0}$. So, from \Cref{eqn:Stheta,eqn:SR}, we have
	\begin{equation*}
		\theta_U = 0, \quad \dot{R} + k_R R = 0,
	\end{equation*}
	which, on solving, leads to
	\begin{equation}\label{eqn: theta_R_time}
		\gamma_U(t) = \theta(t), \quad R(t) = R(0)e^{-k_R t}.
	\end{equation}
	From \Cref{eqn: theta_R_time}, it follows that $\theta_U \rightarrow 0$ and $R\rightarrow 0$ as $t\rightarrow \infty$. The schematic for overall implementation of the strategy is depicted in \Cref{fig:control-schematic}.
	\begin{figure}[h!]
		\centering
		\includegraphics[width=0.8\linewidth]{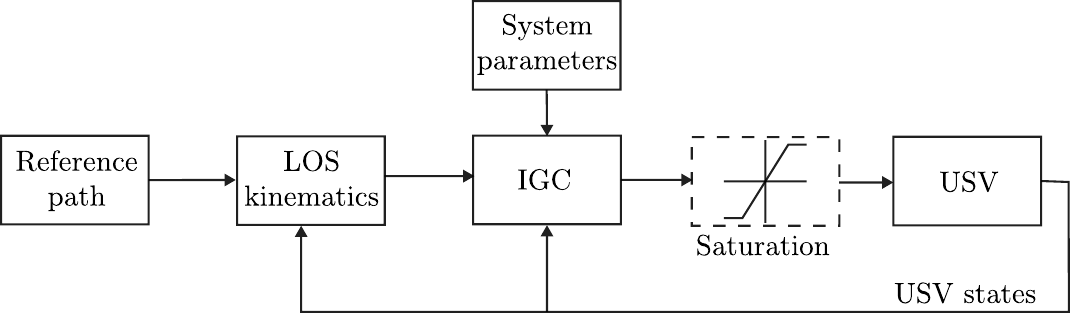}
		\caption{Representation of the overall implementation of IGC design.}
		\label{fig:control-schematic}
	\end{figure}
	So far, we have not considered the actuator bounds in the design. However, as physical actuators have bounded control, it is necessary to bound the control demands. A saturation block is commonly employed, which works in most cases. However, the stability guarantee is no longer applicable. To address this issue related to overall system stability, we incorporate a smooth input saturation model in the design strategy, as presented next.
	\subsection{Design with control input saturation}
	In this subsection, we account for the asymmetric actuator bounds in the design of the control strategy by using the smooth input saturation model in \Cref{eqn:actuator}. In \Cref{sec:IGC}, based on the set objectives and the error variables, we established the dynamics of the switching surfaces. Building upon that, to design the controller while incorporating the input saturation, consider the system dynamics as
	\begin{align}\label{eqn:S_dynamics}
		\begin{bmatrix}
			\dot{\mathcal{S}}_{\theta_U} \\ \dot{\mathcal{S}}_R
		\end{bmatrix}&=\begin{bmatrix} f_{\theta_U} +g_{11}\tau_u + g_{12}\tau_r\\
			f_R + g_{21}\tau_u + g_{22}\tau_r
		\end{bmatrix}.
	\end{align}
	Note that \Cref{eqn:S_dynamics}, is obtained as a result of the combination of USV dynamics as well as the engagement kinematics. Now, the goal is to design control inputs to stabilize these dynamics and meet the path-following objective. To do so, rewriting \Cref{eqn:S_dynamics} and augmenting \Cref{eqn:actuator} to obtain a new transformed systems dynamics as
	\begin{align}
		\dot{\mathcal{S}}&= \mathcal{F}+ \mathcal{G} \mathbf{\tau} \label{eqn:S_vec}\\
		\dot{\pmb{\tau}} &= \left[\pmb{Q}(\pmb{I} - \pmb{G}_M) + (\pmb{I}-\pmb{Q})(\pmb{I} - \pmb{G}_m) \right]\pmb{\tau}_c - \pmb{\rho \tau} \label{eqn:ip_dynamics}
	\end{align}
	where $\mathcal{F}$, $\mathcal{G}$ are as in \Cref{eqn:FG} and $\pmb{\rho} \in \mathbb{R}^{2\times2}$ is a diagonal matrix with positive entries lying between 0 and 1. The closer $\pmb{\rho}$ is to zero, the closer the control inputs approach the saturation bounds. Now, to proceed with the design, define the error variables as 
	\begin{align}
		\pmb{e}_s &= \mathcal{S}, \quad \pmb{z} = \pmb{\tau}- \pmb{\alpha}, \label{eqn:z}
	\end{align}
	where variable $\pmb{z}$ is an intermediate variable and $\pmb{\alpha}$ is a stabilizing function to be designed. Consider a Lyapunov function candidate as
	\begin{align}
		\mathcal{V}_1 = \dfrac{1}{2}\pmb{e}_s^\top \pmb{e}_s = \mathcal{S}^\top \mathcal{S},
	\end{align}
	which, on differentiating with respect to time and using \Cref{eqn:S_vec}, leads to
	\begin{align}
		\dot{\mathcal{V}}_1 = \mathcal{S}^\top \dot{\mathcal{S}} =\mathcal{S}^\top\left[ \mathcal{F}+ \mathcal{G} \mathbf{\tau} \right].
	\end{align}
	On substituting $\pmb{\tau}$ from \Cref{eqn:z} into $\dot{\mathcal{V}}_1$ yields
	\begin{align}
		\dot{\mathcal{V}}_1 =\mathcal{S}^\top\left[ \mathcal{F}+ \mathcal{G} \left( \pmb{z}+\pmb{\alpha}\right) \right]\label{eqn:V1_dot}
	\end{align}
	Choosing the term $\pmb{\alpha}$ as 
	\begin{align}
		\pmb{\alpha} = \mathcal{G}^{-1}\left(-\mathcal{F} - \pmb{K}_1 \mathcal{S} \right). \label{eqn:alpha}
	\end{align}
	Using \Cref{eqn:alpha,eqn:V1_dot}, one may obtain
	\begin{align}
		\dot{\mathcal{V}}_1 =\mathcal{S}^\top \mathcal{G}\pmb{z} - \mathcal{S}^\top\pmb{K}_1 \mathcal{S}.
	\end{align}
	In order to cancel the non-square term, consider another Lyapunov function candidate as
	\begin{align}
		\mathcal{V}_2 =\mathcal{V}_1 +\dfrac{1}{2}\pmb{z}^\top \pmb{z}. \label{eqn:V2}
	\end{align}
	By differentiating \Cref{eqn:V2} with respect to time, we get
	\begin{align}
		\dot{\mathcal{V}}_2 = \dot{\mathcal{V}}_1 + \pmb{z}^\top \dot{\pmb{z}}.\label{eqn:V2_dot}
	\end{align}
	Using \Cref{eqn:z}, one may obtain using time differentiation
	\begin{align}
		\dot{\pmb{z}} = \dot{\pmb{\tau}} - \dot{\pmb{\alpha}} \label{eqn:zdot}
	\end{align}
	Substituting $\dot{\pmb{\tau}}$ from \Cref{eqn:ip_dynamics}, we get
	\begin{align}
		\dot{\pmb{z}} =\left[\pmb{Q}(\pmb{I} - \pmb{G}_M) + (\pmb{I}-\pmb{Q})(\pmb{I} - \pmb{G}_m) \right]\pmb{\tau}_c - \pmb{\rho \tau} - \dot{\pmb{\alpha}} \label{eqn:z_dot}
	\end{align}
	Choosing the command input $\pmb{\tau}_c$ as 
	\begin{align}
		\nonumber \pmb{\tau}_c &= \left[\pmb{Q}(\pmb{I} - \pmb{G}_M) + (\pmb{I}-\pmb{Q})(\pmb{I} - \pmb{G}_m) \right]^{-1}\left( \pmb{\rho \tau} + \dot{\pmb{\alpha}}\right) - \mathcal{G}^\top \mathcal{S} - \pmb{K}_2\pmb{z}\label{eqn:tauc}
	\end{align}
	where $\pmb{K}_2 = \diag [k_{21}, k_{22}]$ satisfying $k_{2i} >0 ~\forall ~i \in \{1,2\}$
	Using \Cref{eqn:tauc,eqn:z_dot,eqn:V2_dot}, we get
	\begin{align}
		\dot{\mathcal{V}}_2 =  - \mathcal{S}^\top\pmb{K}_1 \mathcal{S} -\pmb{z}^\top \pmb{K_2z}.
	\end{align}
	As the constants $k_{1i}$ and $k_{2i}$ for $i \in \{1,2\}$ are all positive, the derivative of $\mathcal{V}_2$ becomes negative definite. Therefore, the stability of the derived control is guaranteed with errors converging to zero. The error variable $\mathcal{S}$ going to zero leads to path-following. The gain parameters are to be chosen such that the USV first nullifies its error with LOS and then reaches the path. 
	\begin{figure}[h!]
		\centering
		\includegraphics[width=0.8\linewidth]{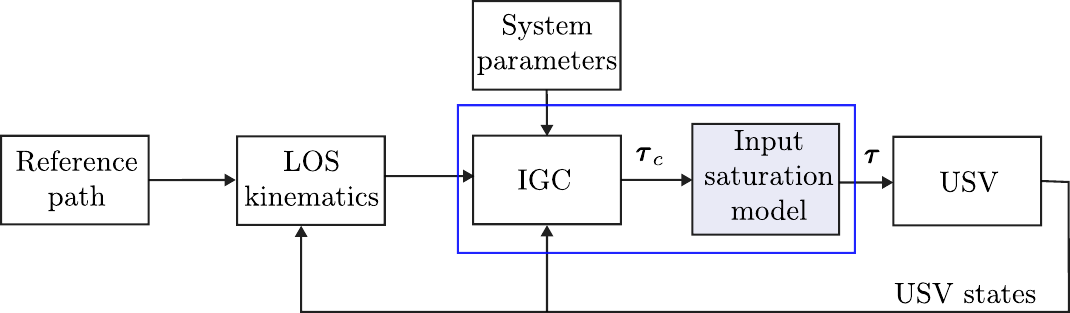}
		\caption{Schematic of implementation of IGC with the consideration of input saturation.}
		\label{fig:control-schematic-inp-sat}
	\end{figure}
	\Cref{fig:control-schematic-inp-sat} shows a schematic of the implementation of the control strategy while accounting for the actuator bounds in the design.
	\section{Simulations}\label{sec:simulation}
	In this section, we validate and demonstrate the performance of the proposed integrated path-following algorithm for the USVs via MATLAB\textsuperscript{\textregistered} simulations. In all the simulations, we have used the parameters of the CyberShip II model adopted from \cite{skjetne2004modeling}. The model is $1:70$ scaled version of a supply vessel with the model dimensions as $(1.255\,\rm{m} \times 0.29\,\rm{m})$. The hydrodynamic derivatives and other model parameters are tabulated in \Cref{tab:params}. For performance validation, we consider elliptical and 8-shaped paths, which are sufficiently smooth and complex.
	\begin{table}[htbp]
		\centering
		\caption{Model parameters of the CyberShip II \cite{skjetne2004modeling}.}
		\begin{tabular}{|c|c|c|c|c|c|}
			\hline
			\textbf{Params} & \textbf{Value} & \textbf{Params} & \textbf{Value} & \textbf{Params} & \textbf{Value} \\
			\hline
			$m$ & 23.800 & $I_z$ & 1.760 & $x_g$ & 0.046 \\
			$X_{\dot{u}}$ & -2.0 & $Y_{\dot{v}}$ & -10.0 & $Y_{\dot{r}}$ & -0.0 \\
			$N_{\dot{v}}$ & -0.0 & $N_{\dot{r}}$ & -0.0 & $X_u$ & -0.72253 \\
			$X_{|u|u}$ & -1.32742 & $X_{uuu}$ & -5.86643 & $Y_{v}$ & -2.0 \\
			$Y_{|v|v}$ & -36.47287 & $N_{v}$ & 0.03130 & $N_{|v|v}$ & 3.95645 \\
			$Y_{|r|v}$ & -0.805 & $Y_{r}$ & -7.250 & $Y_{|v|r}$ & -0.845 \\
			$Y_{|r|r}$ & -3.450 & $N_{|r|v}$ & 0.130 & $N_{r}$ & -1.900 \\
			$N_{|v|r}$ & 0.080 & $N_{|r|r}$ & -0.750 & & \\
			\hline
		\end{tabular}
		\label{tab:params}
	\end{table}
	
	\subsection{Without input saturation model}
	The controller parameters are taken as $k_R = 5$, $\tilde{M}_\theta = 0.3$, $\tilde{M}_R = 0.08$, $N_R = 0.08$, and $N_\theta = 0.3$. Note that the controller gains are chosen by systematically varying the parameters such that $\theta_U$ goes to zero quickly and then $R\rightarrow0$ is achieved. Observe that gain values with subscript $(\cdot)_\theta$, which are responsible for steering the heading, are slightly higher than ones subscripted with $(\cdot)_R$, responsible for reducing the distance between the path and the USV.
	
	\subsubsection{Elliptical reference path}
	First, we present the path-following for the case of an elliptical path described by the equation
	\begin{align}
		x_d(\Theta) &= 4 \sin \Theta, \quad y_d(\Theta) = 2.5(1- \cos \Theta),\label{eqn:ellipse}
	\end{align}
	where the path evolves with time according to $\Theta= 0.05\,t$. The USV is positioned into three different initial configurations denoted with $P_1$, $P_2$, and $P_3$ in \Cref{fig:ellipse}. In the case of $P_1$, the USV's initial location is $2$ m South and $5$ m West with its reference axis (course angle, $\psi$) aligned with North at $30^\circ$. Next, for case $P_2$, the USV is positioned at $3$ m South and $3$ m East with initial $\psi = -30^\circ$. In the last case, $P_3$, the USV is placed $6$ m North and $4$ m West and $\psi = 140^\circ$. In each of these cases, the USV is initially moving at a surge velocity of $0.5$ m/s. Note that in all subsequent simulation figures illustrating the path, the initial position of the USV is denoted with magenta filled square while the starting point of the reference path is marked with black filled square.
	\begin{figure*}[ht!]
		\centering
		\begin{subfigure}{0.5\textwidth}
			\includegraphics[width=\linewidth]{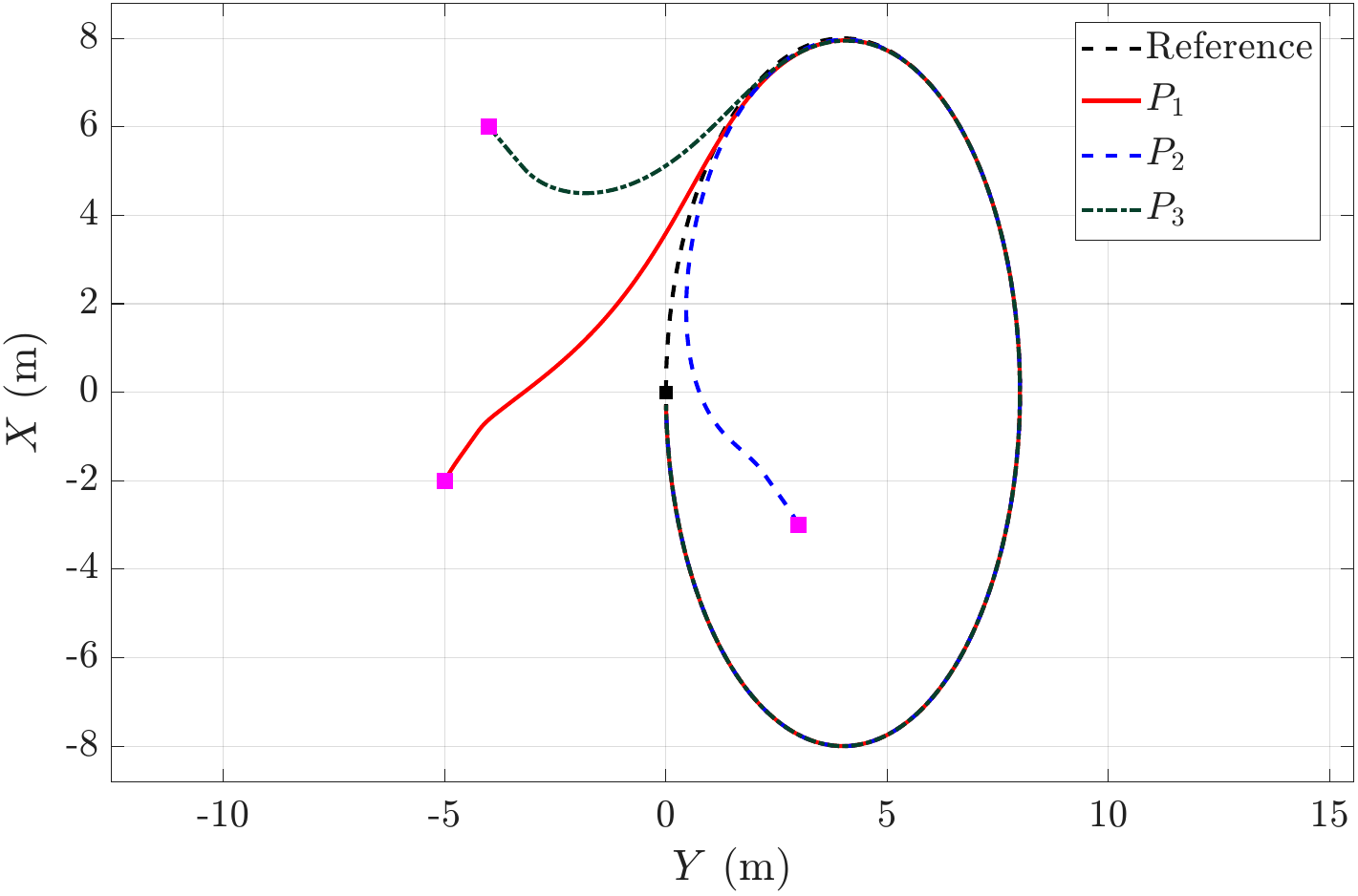}
			\caption{Path of the USV in inertial frame.}
			\label{fig:ellipse_path}
		\end{subfigure}%
		\begin{subfigure}{0.5\textwidth}
			\includegraphics[width=\linewidth]{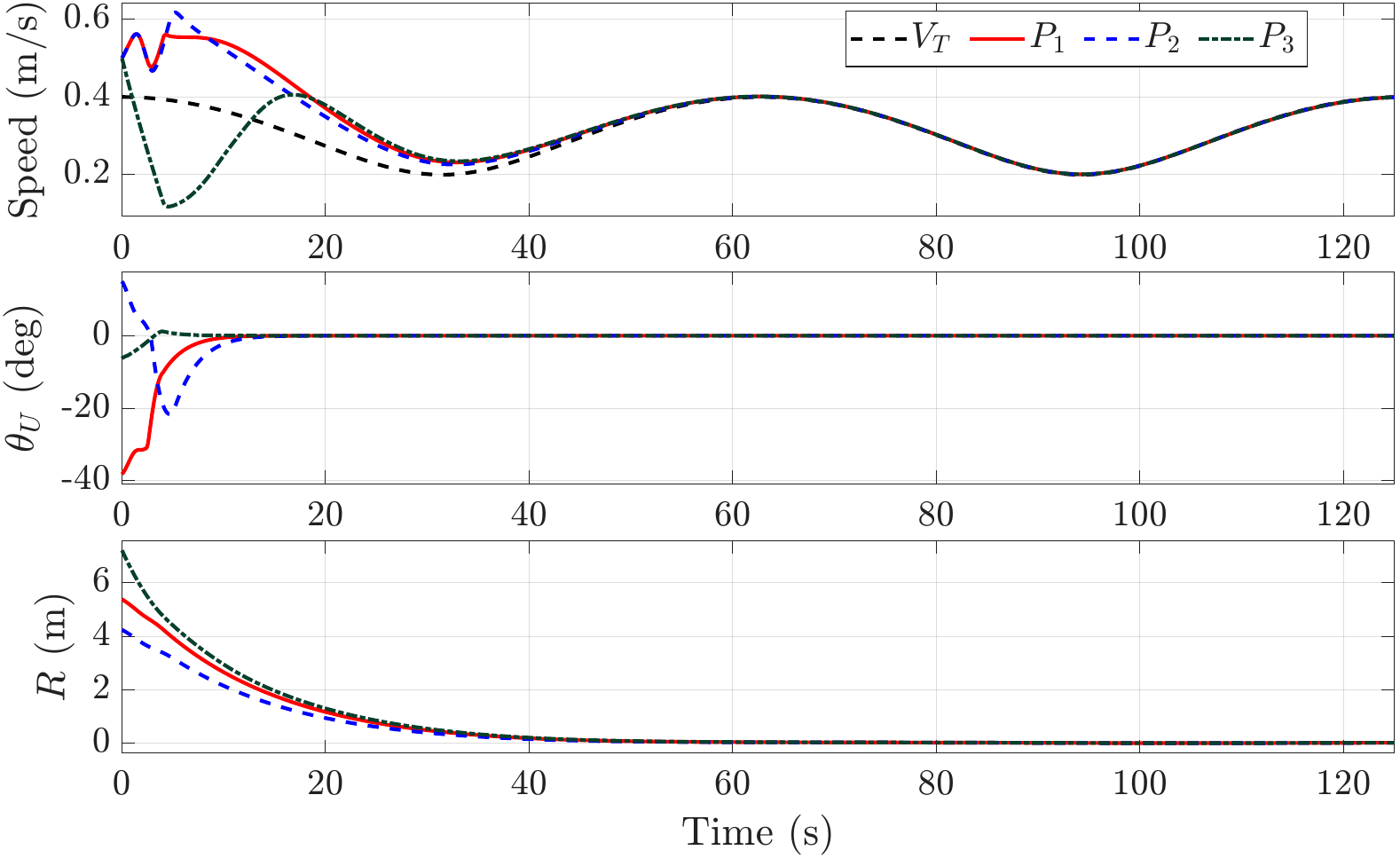}
			\caption{Speed, $\theta_U$ and $R$.}
			\label{fig:ellipse_speed_lead_R}
		\end{subfigure}
		\begin{subfigure}{0.5\textwidth}
			\includegraphics[width=\linewidth]{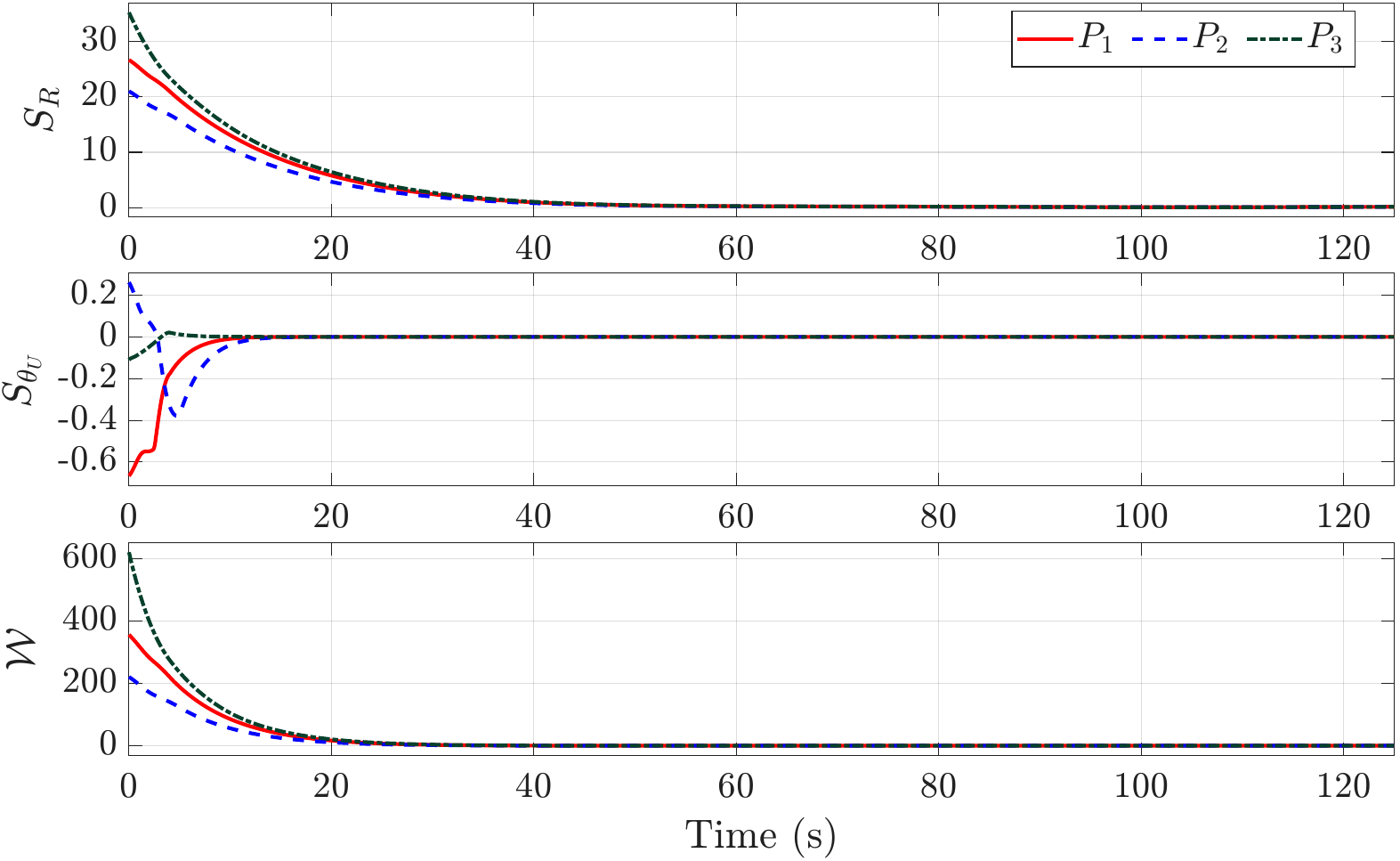}
			\caption{Evolutions of sliding surfaces and Lyapunov function.}
			\label{fig:ellipse_surface}
		\end{subfigure}%
		\begin{subfigure}{0.5\textwidth}
			\includegraphics[width=\linewidth]{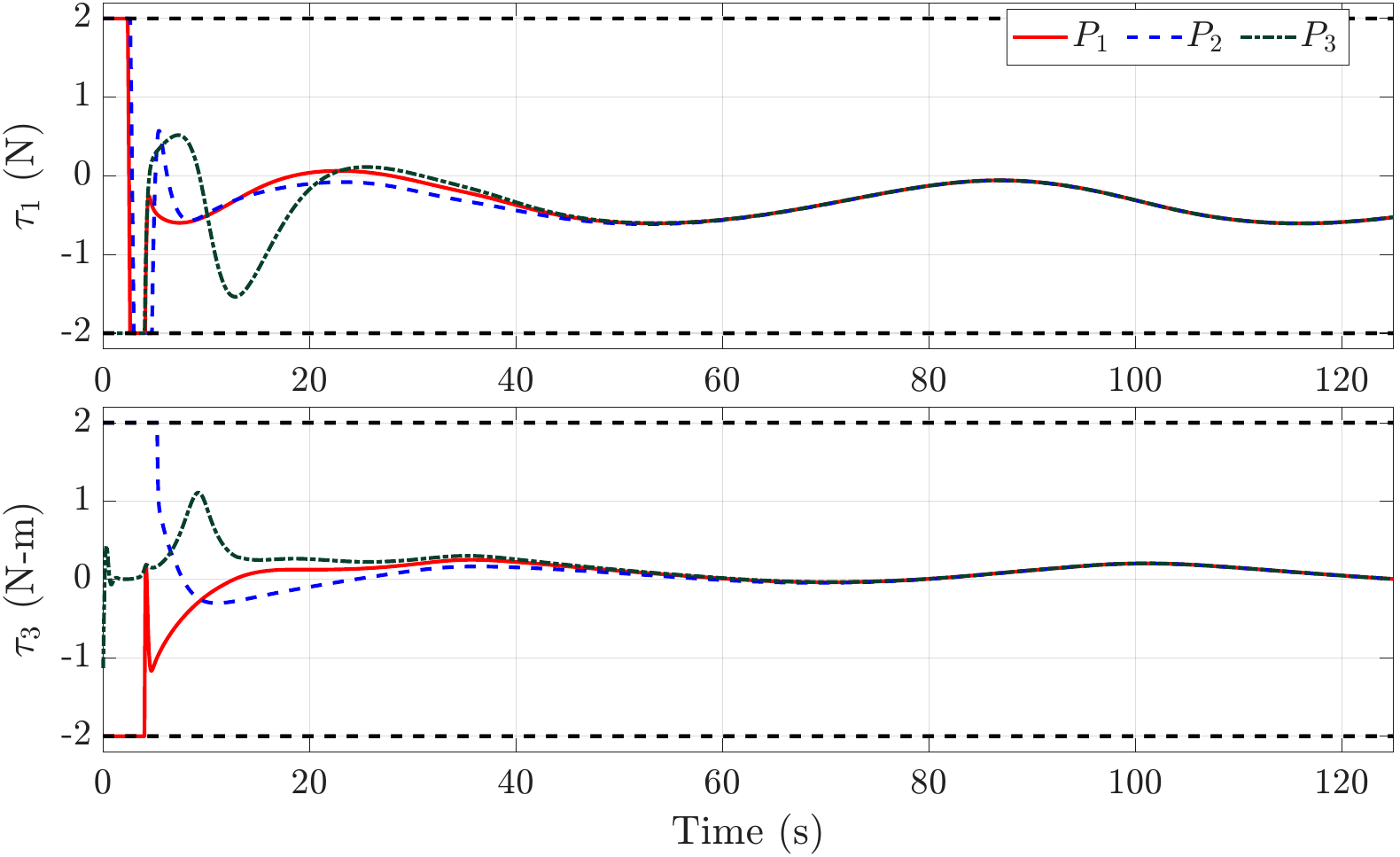}
			\caption{Control inputs.}
			\label{fig:ellipse_tau}
		\end{subfigure}
		\caption{Path-following performance for the case of an elliptical reference path.}
		\label{fig:ellipse}
	\end{figure*} 
	As shown in \Cref{fig:ellipse_path}, USV tracks the reference path in all the cases. It can be observed from \Cref{fig:ellipse_speed_lead_R}, that the USV first aligns its velocity vector with LOS, making $\theta_U$ zero. After that, the range $R$ converges to zero. Also, the speed of the USV tracks the reference speed ($V_T$).
	This clearly demonstrates that USV can follow the reference path using the proposed controller. Also, the sliding surfaces, along with the value of the Lyapunov function, as depicted in  \Cref{fig:ellipse_surface}, decrease to zero. Though it should decrease monotonically, the initial deviations are due to the saturation in the control commands. \Cref{fig:ellipse_tau} shows the control input being applied to the USV. The initial inputs saturate as the USV quickly steers its heading and gets closer to the path. Later, the control demand reduces once the alignment is done.
	
	\subsubsection{8-shaped reference path}
	Next, we illustrate the performance of the proposed algorithm described by \Cref{eqn:tau_final}, in following another challenging 8-shaped path, governed by the equation 
	\begin{align}
		x_d = 8 \cos \Theta_1 -4, \quad y_d = 4 \sin \Theta_2,\label{eqn:eight}
	\end{align}
	where the evolution of path parameters is according to $\Theta_1 = 0.05t$, and $\Theta_2 = 0.1$. 
	Now, we validate the proposed path-following strategy across three different initial configurations of the USV. The initial configuration of the USV is described by its position and heading as indicated by the tuple, $(x,y,\psi)$. In the case of $P_1$, the US starts from $(5~\text{m},0~\text{m},120^\circ)$. For $P_2$, the USV is initially at $(2~\text{m},-2~\text{m},70^\circ)$. In the last case, $P_3$, the initial configuration of the USV is $(5~\text{m},-3~\text{m},100^\circ)$. In all these cases, the USV's initial velocity is taken to be $\pmb{\nu}(0) = (0.5~\text{m/s},0~\text{m/s},0~\text{rad/s})$.
	\begin{figure*}[ht]
		\centering
		\begin{subfigure}{0.5\textwidth}
			\includegraphics[width=\linewidth]{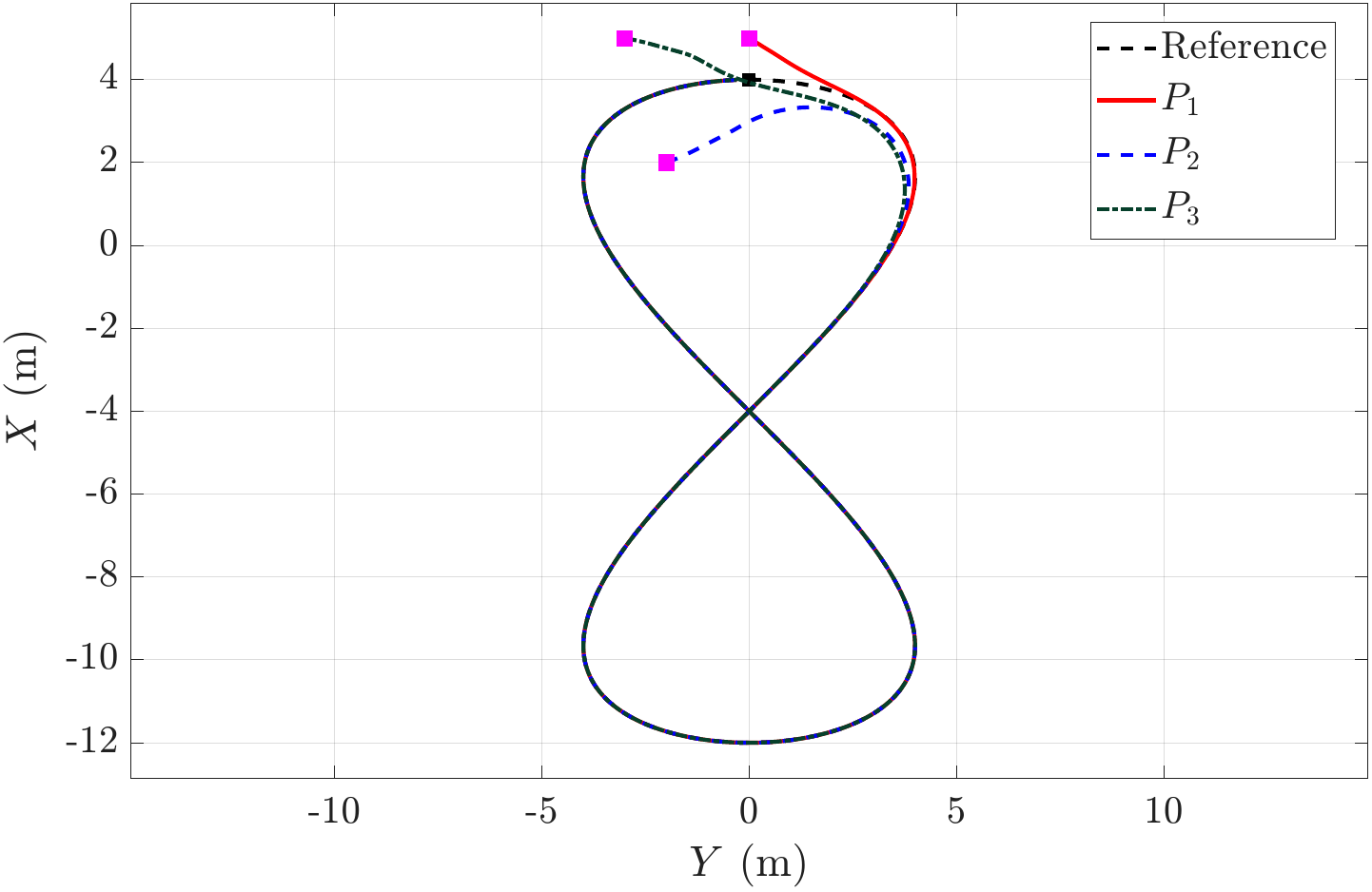}
			\caption{Path of the USV in inertial frame.}
			\label{fig:eight_path}
		\end{subfigure}%
		\begin{subfigure}{0.5\textwidth}
			\includegraphics[width=\linewidth]{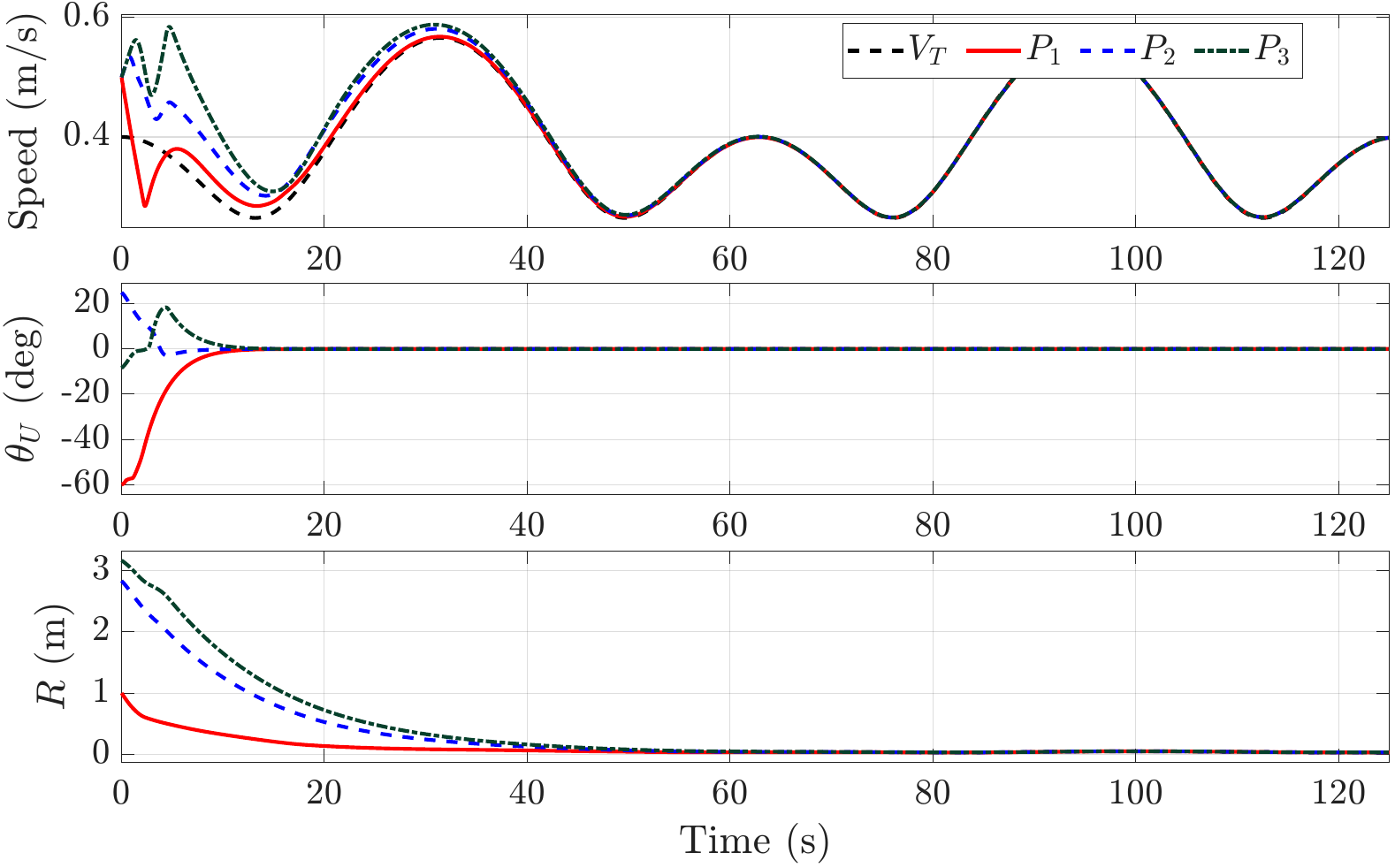}
			\caption{Speed, $\theta_U$ and $R$.}
			\label{fig:eight_speed_lead_R}
		\end{subfigure}
		\begin{subfigure}{0.5\textwidth}
			\includegraphics[width=\linewidth]{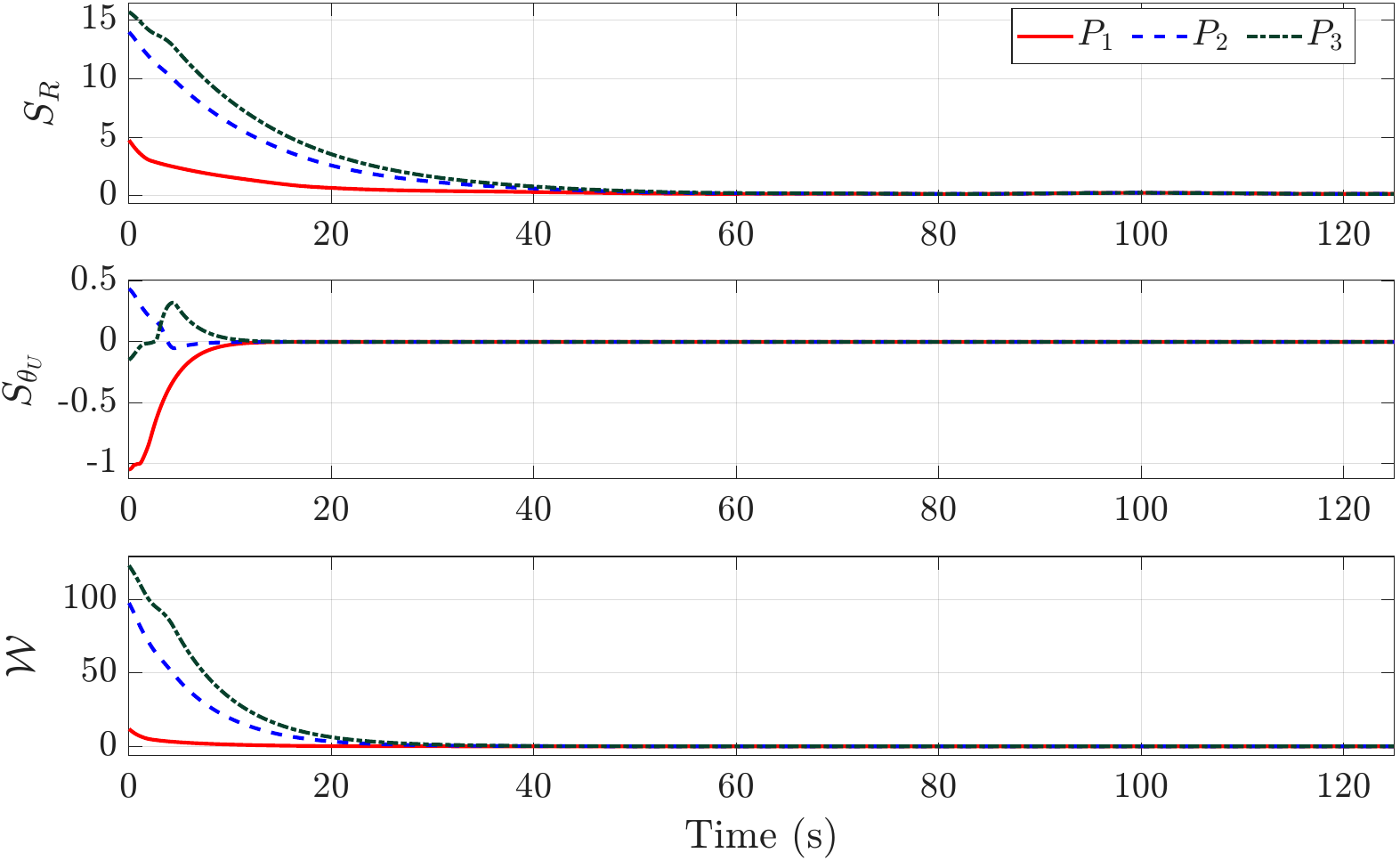}
			\caption{Evolutions of sliding surfaces and Lyapunov function.}
			\label{fig:eight_surface}
		\end{subfigure}%
		\begin{subfigure}{0.5\textwidth}
			\includegraphics[width=\linewidth]{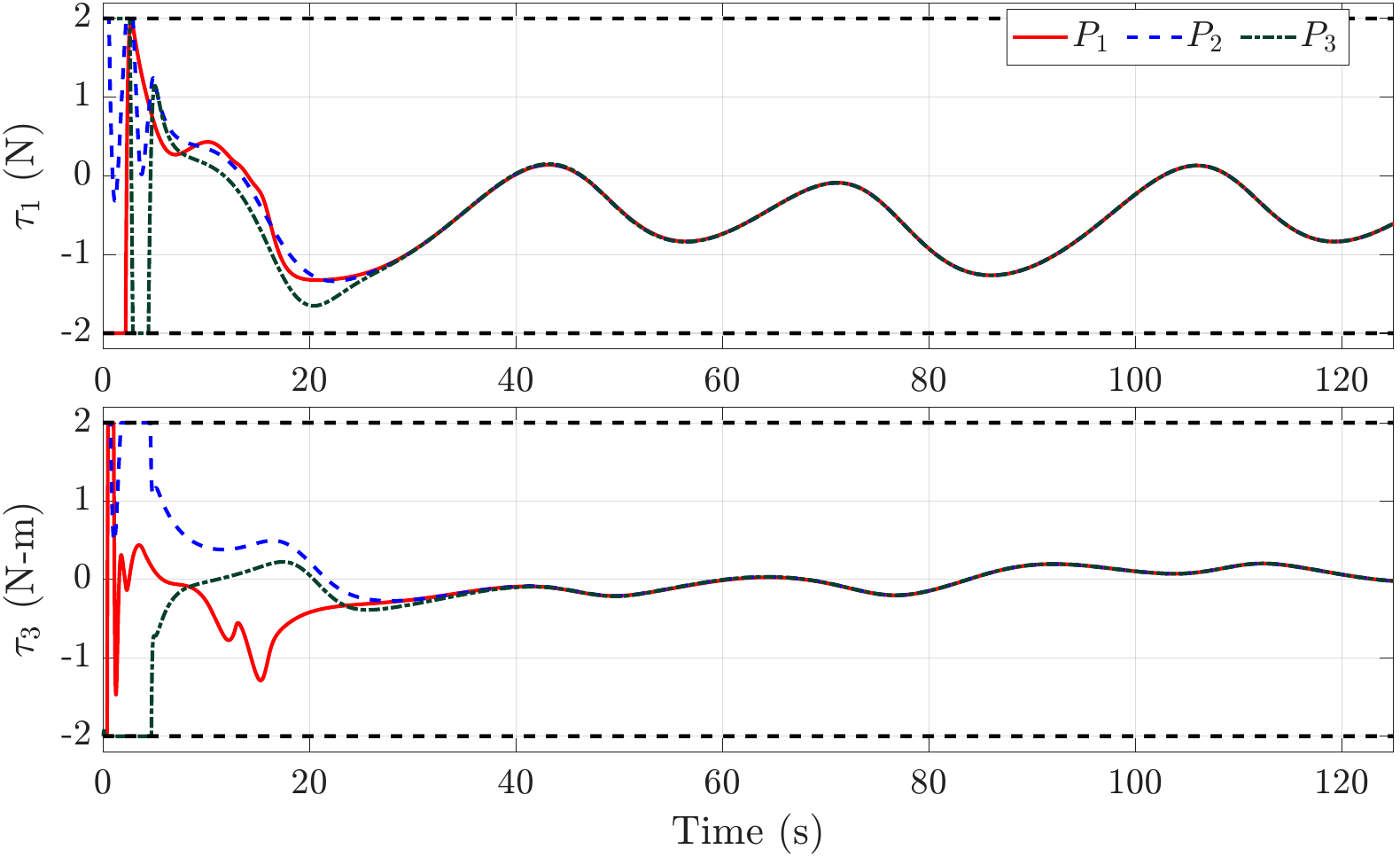}
			\caption{Control inputs.}
			\label{fig:eight_tau}
		\end{subfigure}
		\caption{Path-following performance for the case of an 8-shaped reference path.}
		\label{fig:eight}
	\end{figure*}
	\Cref{fig:eight} shows the performance of the proposed integrated guidance algorithm in following an 8-shaped path. In \Cref{fig:eight_path}, the USV follows the reference path accurately, starting from three different initial conditions denoted with $P_1$, $P_2$, and $P_3$. As evidenced by \Cref{fig:eight_speed_lead_R}, the speed of the USV closely tracks the reference speed. Additionally, the error in aligning the USV's velocity vector with the LOS approaches zero. Gradually, the distance between the virtual reference point and the path also approaches zero. This is also evidenced by the plots of Evolutions of sliding surfaces and Lyapunov function values shown in \Cref{fig:eight_surface}. Thus, it is demonstrated that the USV can steer itself from different initial positions to the desired reference path and maintain it on the path. The control input profile, as depicted in \Cref{fig:eight_tau}, is similar to the case of an elliptical path. The initial control input saturation is due to large initial errors that subsequently decrease, leading to a reduction in the control input demand. As the USV follows the path with different speeds assigned to each point, it steers the speed by adjusting the surge thrust, $\tau_1$. Hence, it is observed that the speed and surge thrust profiles are similar.
	
	\subsection{With input saturation model}
	In this subsection, we validate the control strategy proposed in \Cref{eqn:ip_dynamics,eqn:tauc}. This strategy incorporates the bounds on the control input into the design of the control inputs.  The controller parameters used for simulations are given as: $\pmb{\rho} = \diag [0.2~~ 0.2]$, $\pmb{K}_1 = \diag[0.2~~0.1]$, $\pmb{K}_2 = \diag [5~~1]$. We test the efficacy of the proposed algorithm through numerical simulations in following elliptical and 8-shaped reference paths.
	
	\subsubsection{Elliptical reference path}
	The elliptical reference path is governed by \Cref{eqn:ellipse}. An illustration of the path is shown in \Cref{fig:inp_sat_ellipse_path}.
	\begin{figure*}[ht!]
		\centering
		\begin{subfigure}{0.5\textwidth}
			\includegraphics[width=\linewidth]{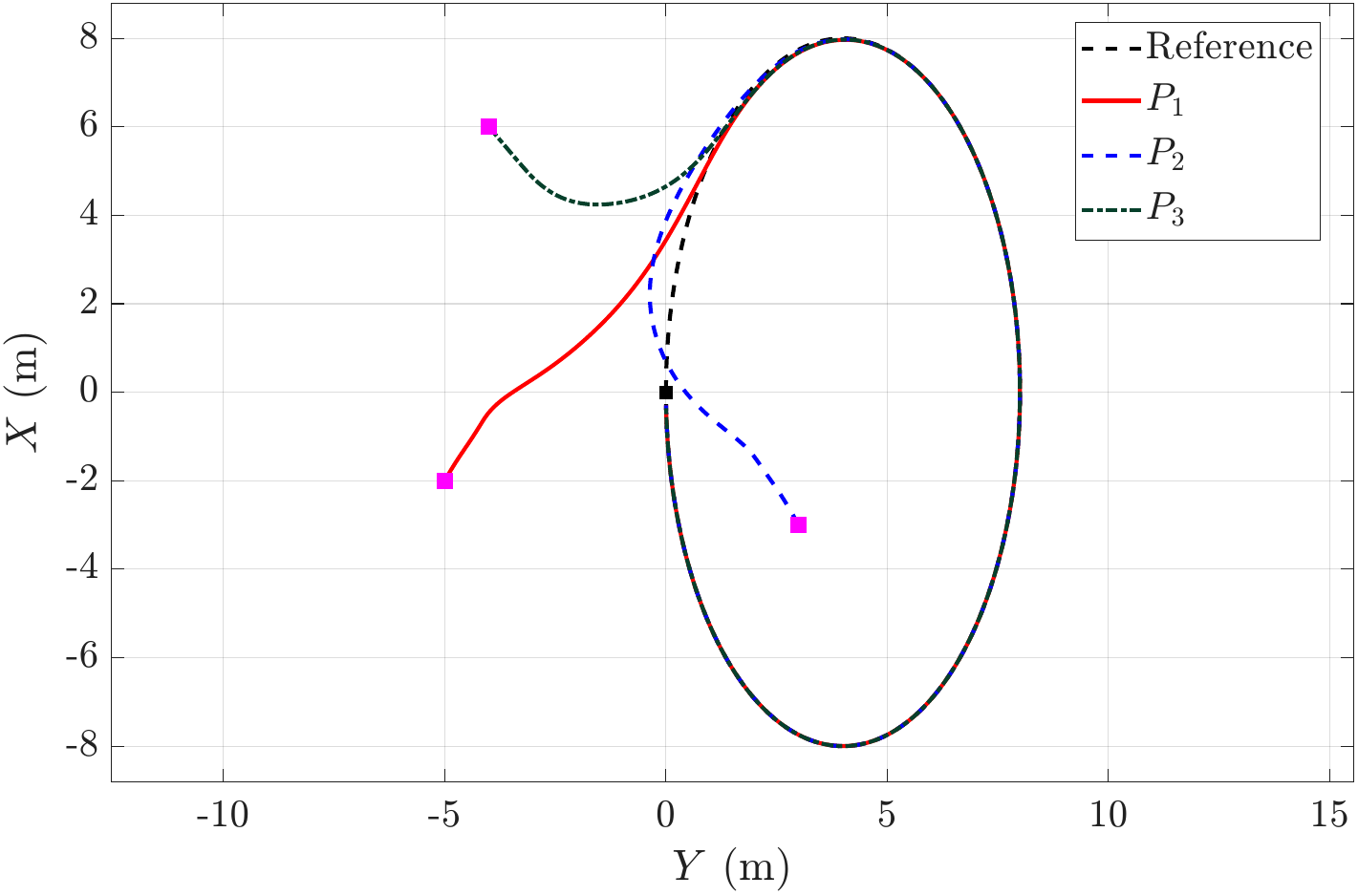}
			\caption{Path of the USV in inertial frame.}
			\label{fig:inp_sat_ellipse_path}
		\end{subfigure}%
		\begin{subfigure}{0.5\textwidth}
			\includegraphics[width=\linewidth]{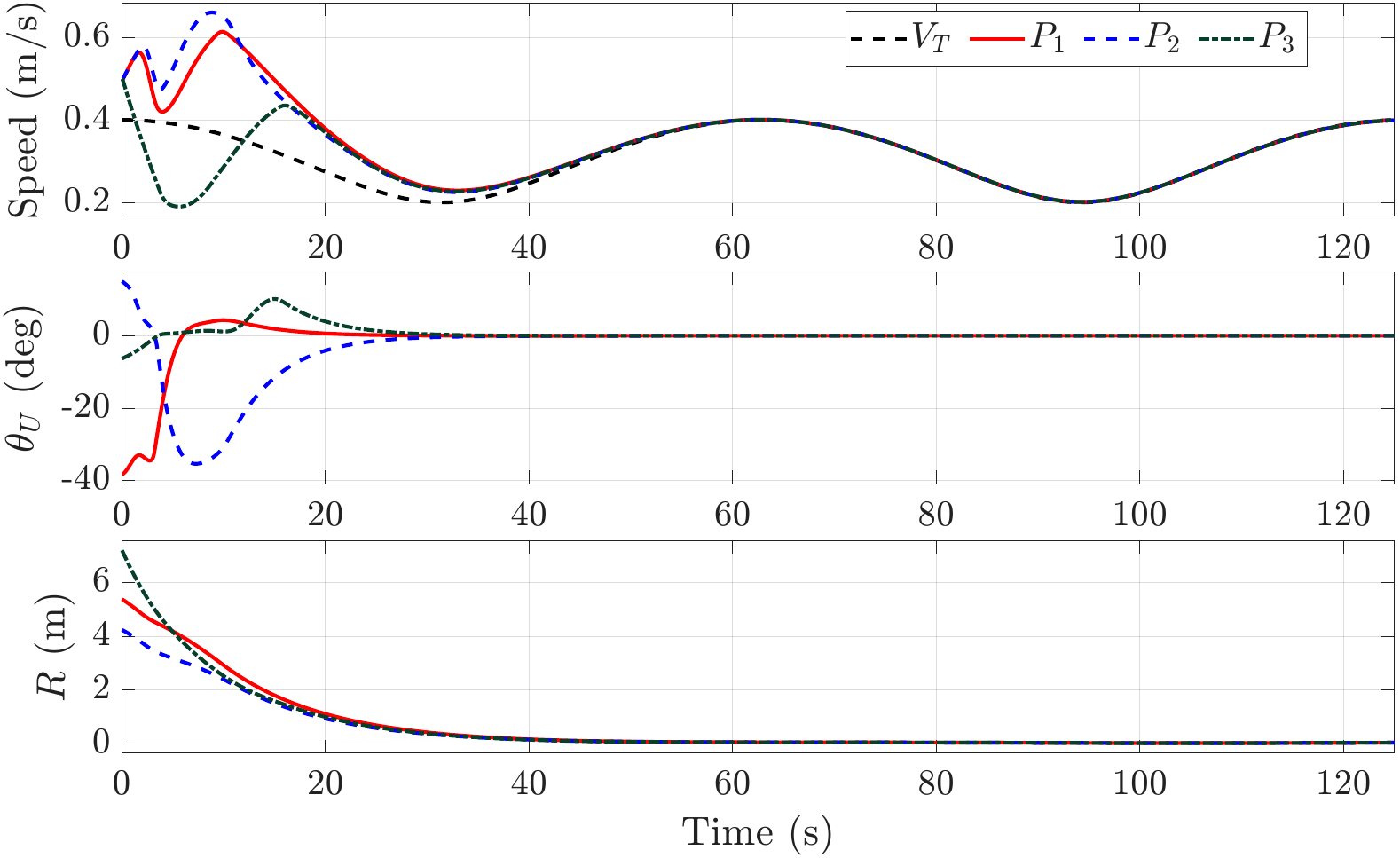}
			\caption{Speed, $\theta_U$ and $R$.}
			\label{fig:inp_sat_ellipse_speed_lead_R}
		\end{subfigure}
		\begin{subfigure}{0.5\textwidth}
			\includegraphics[width=\linewidth]{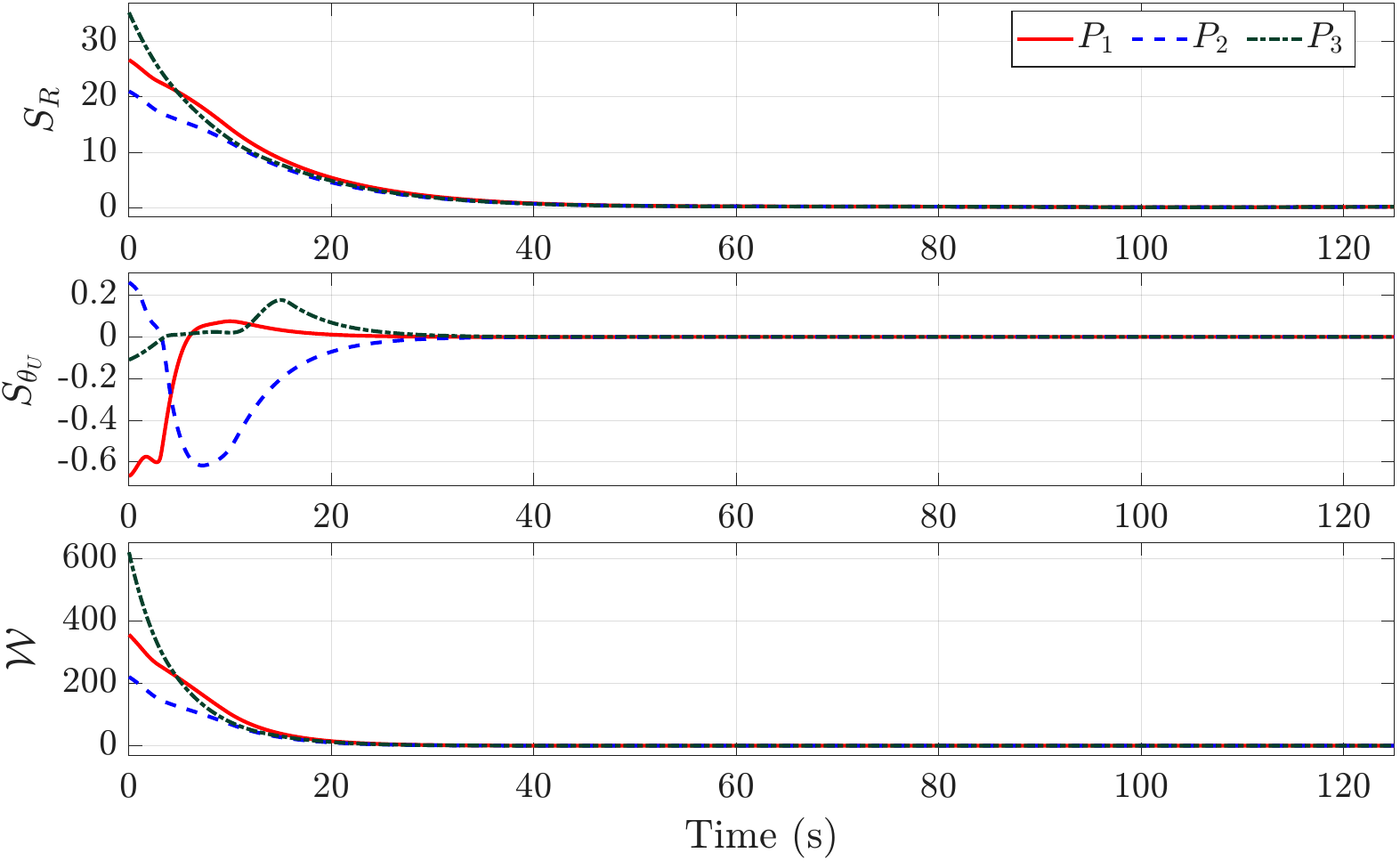}
			\caption{Evolutions of sliding surfaces and Lyapunov function.}
			\label{fig:inp_sat_ellipse_surface}
		\end{subfigure}%
		\begin{subfigure}{0.5\textwidth}
			\includegraphics[width=\linewidth]{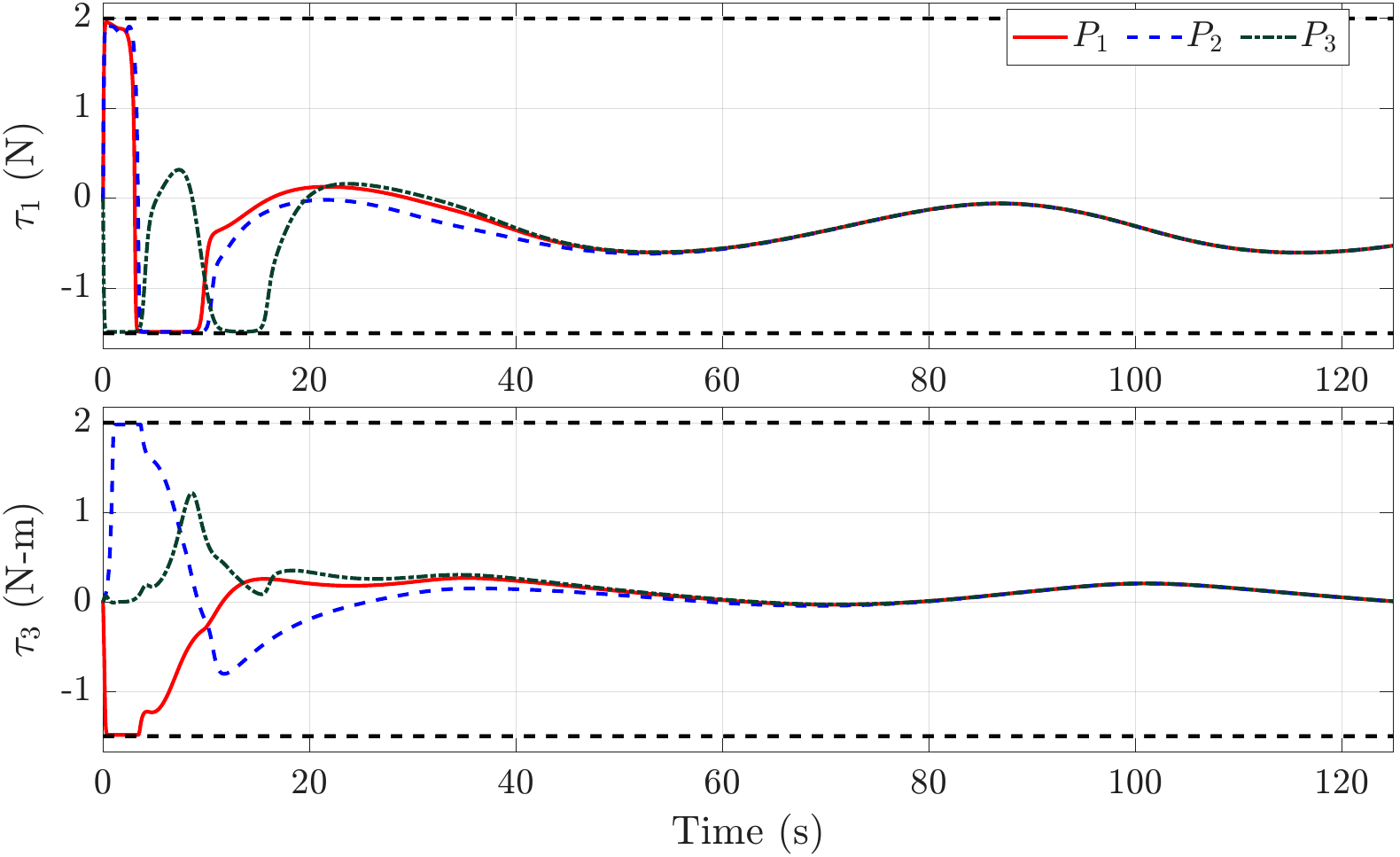}
			\caption{Control inputs.}
			\label{fig:inp_sat_ellipse_tau}
		\end{subfigure}
		\caption{Path-following performance for the case of an elliptical reference path under input saturation.}
		\label{fig:inp_sat_ellipse}
	\end{figure*}
	As evidenced from \Cref{fig:inp_sat_ellipse}, the USV could follow the reference path accurately. \Cref{fig:inp_sat_ellipse_path} shows that the USV follows the path from different chosen initial conditions, namely $P_1$, $P_2$, and $P_3$. The initial configurations in the cases $P_1$, $P_2$, and $P_3$ are denoted by a triplet as $(x,\,y,\,\psi)$: $(-2~\text{m},-5~\text{m},30^\circ)$, $(-3~\text{m},3~\text{m},-30^\circ)$, and $(6~\text{m},-4~\text{m},140^\circ)$.  For all these cases, the USV is able to align itself with LOS, reach the path, and catch up with the assigned speed (see \cref{fig:inp_sat_ellipse_speed_lead_R}). Looking at the speed profile of the USV in \Cref{fig:inp_sat_ellipse_speed_lead_R}, especially for the case $P_3$, when the USV nears the virtual reference point on the path, it decreases its speed (see around $5$ sec). 
	But, now that the reference point is moving at a higher speed, the USV increases its speed and slowly matches the speed assigned to the virtual reference point on the path. This affirms the expected behavior, and USV could follow the specified reference path.
	The control input profiles are depicted in \Cref{fig:inp_sat_ellipse_tau}, which clearly shows that they never exceed the bounds. Note that the asymmetric input bounds are indicated with black dotted horizontal lines. Furthermore, the control input profiles are now smoother and free from corners as opposed to the case when actuator bounds are enforced using a saturation block.
	Once the USV converges on the path, the control input demands in the cases appear to overlap closely.
	
	\subsubsection{8-shaped reference path}
	We now consider an 8-shaped reference path governed by \Cref{eqn:eight} for the demonstration of the efficacy of the proposed technique, which incorporates the actuator bounds in the IGC design.
	\begin{figure*}[ht!]
		\centering
		\begin{subfigure}{0.5\textwidth}
			\includegraphics[width=\linewidth]{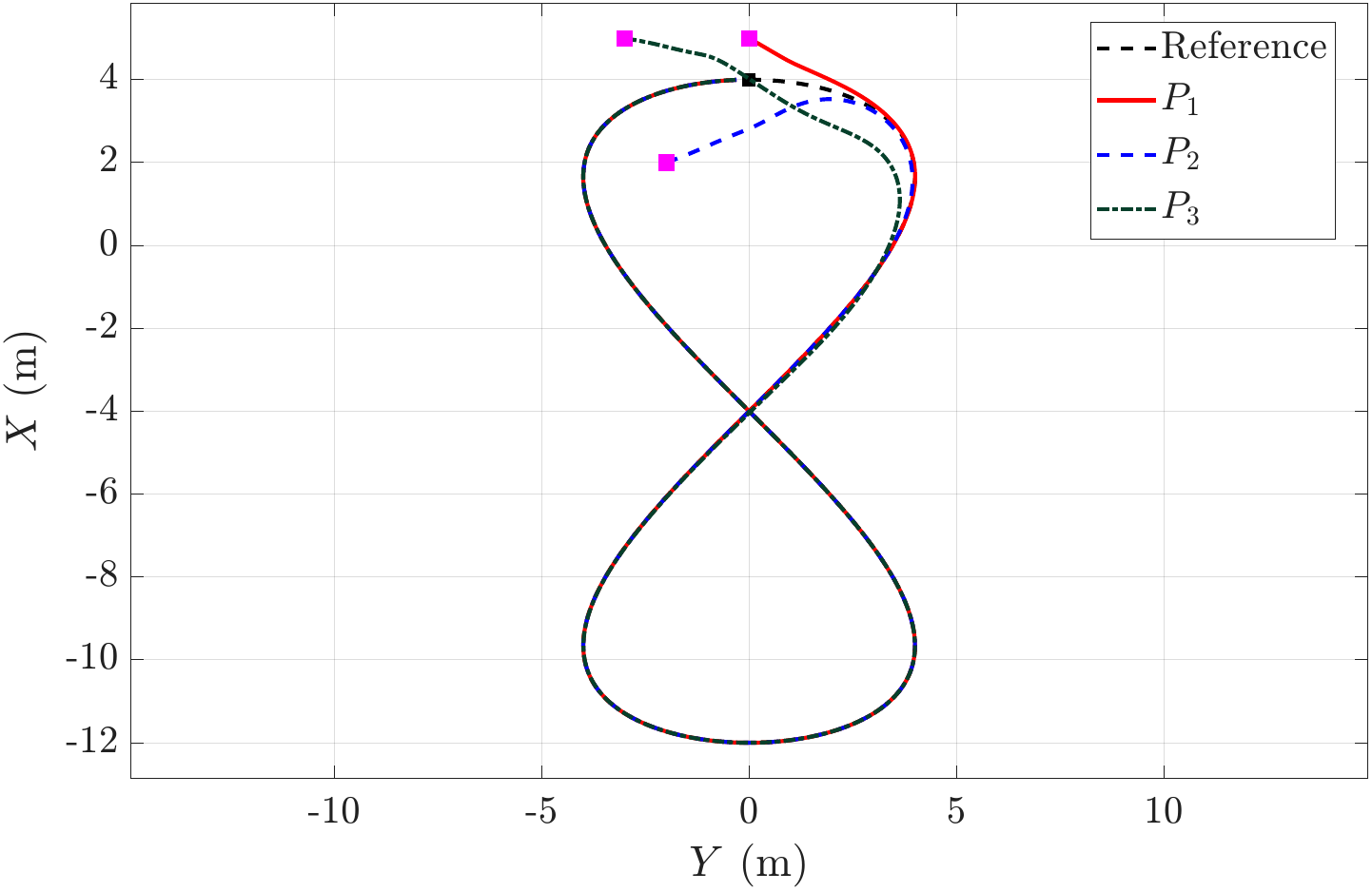}
			\caption{Path of the USV.}
			\label{fig:inp_sat_eight_path}
		\end{subfigure}%
		\begin{subfigure}{0.5\textwidth}
			\includegraphics[width=\linewidth]{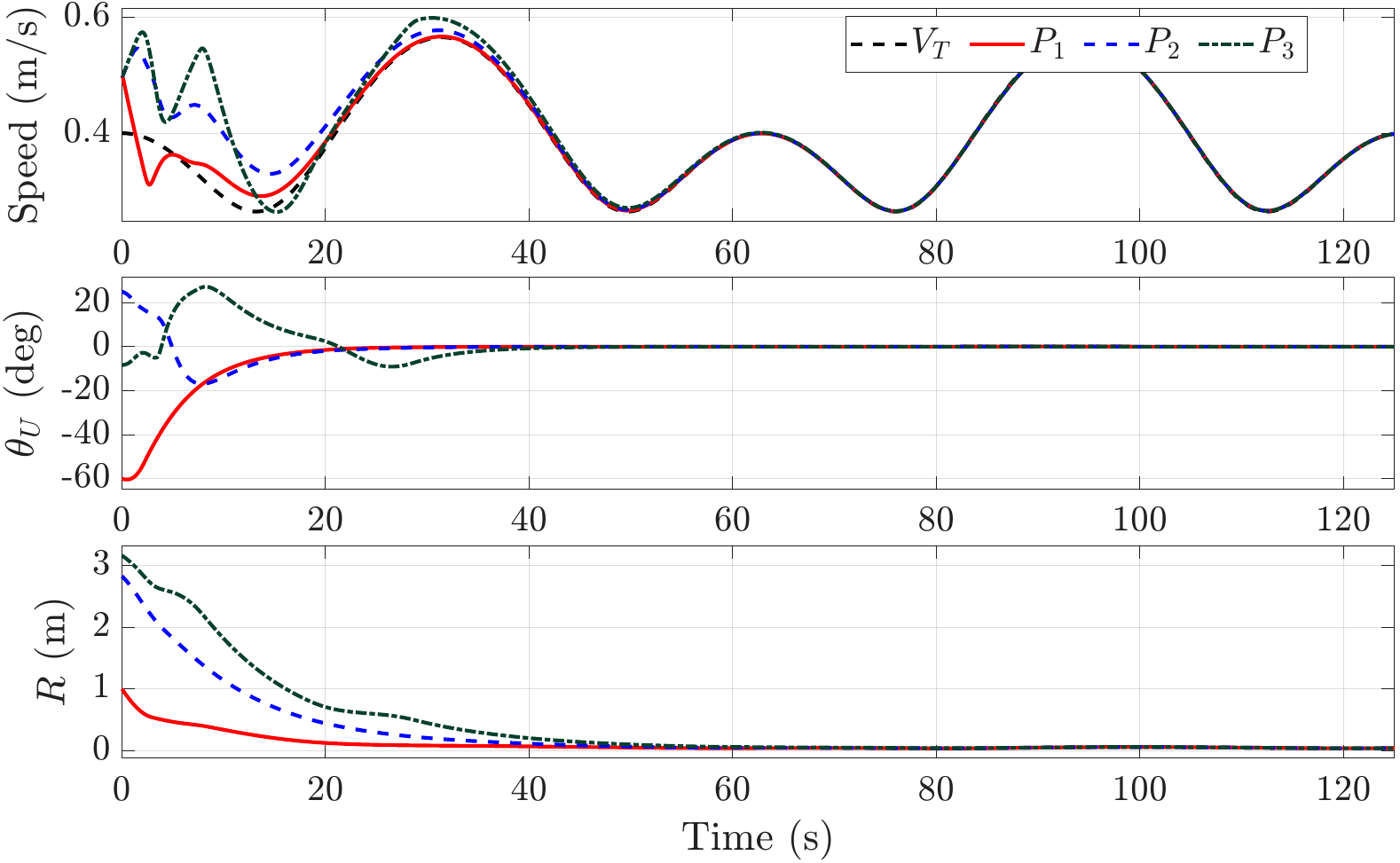}
			\caption{Speed, $\theta_U$ and $R$.}
			\label{fig:inp_sat_eight_speed_lead_R}
		\end{subfigure}
		\begin{subfigure}{0.5\textwidth}
			\includegraphics[width=\linewidth]{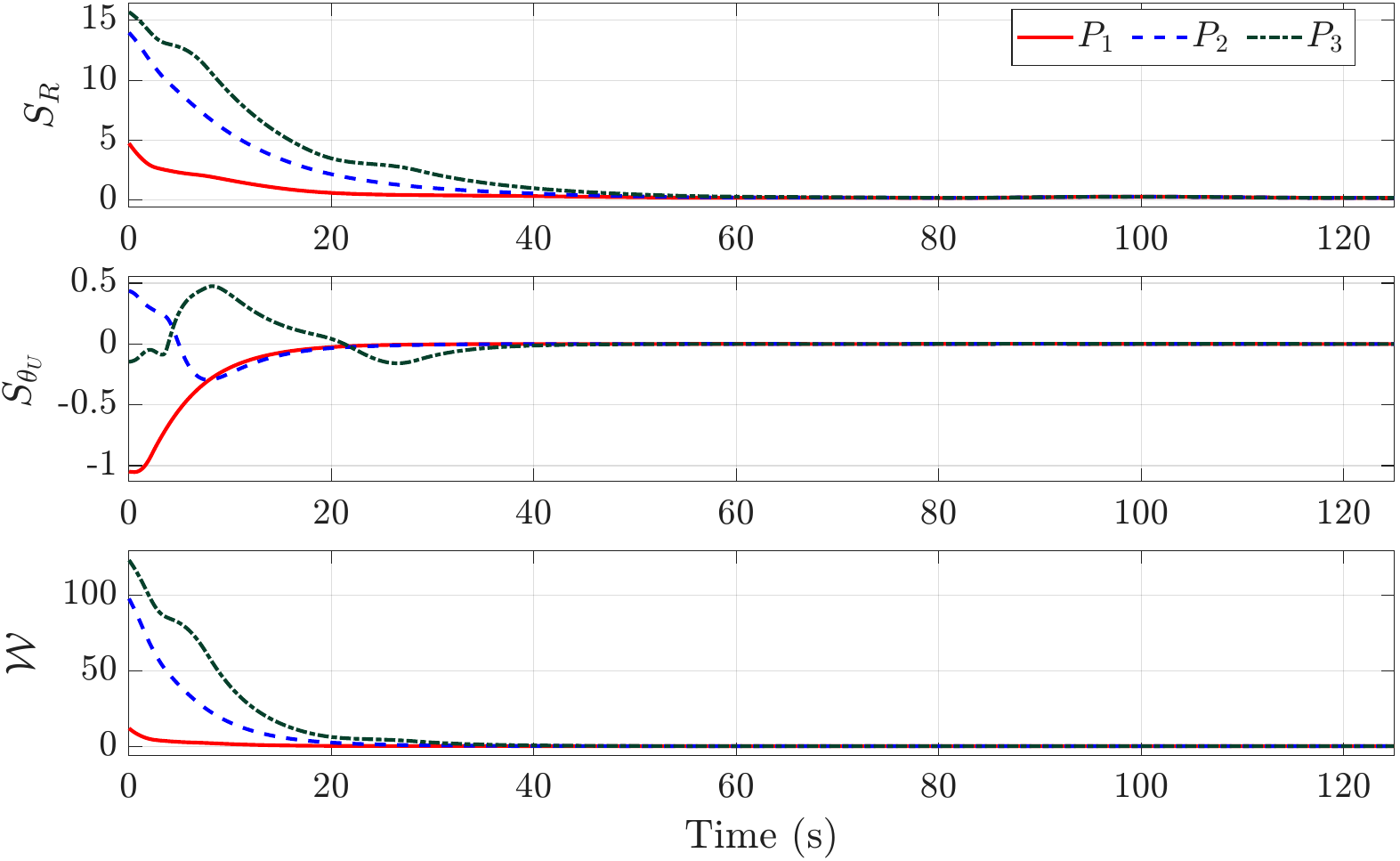}
			\caption{Evolutions of sliding surfaces and Lyapunov function.}
			\label{fig:inp_sat_eight_surface}
		\end{subfigure}%
		\begin{subfigure}{0.5\textwidth}
			\includegraphics[width=\linewidth]{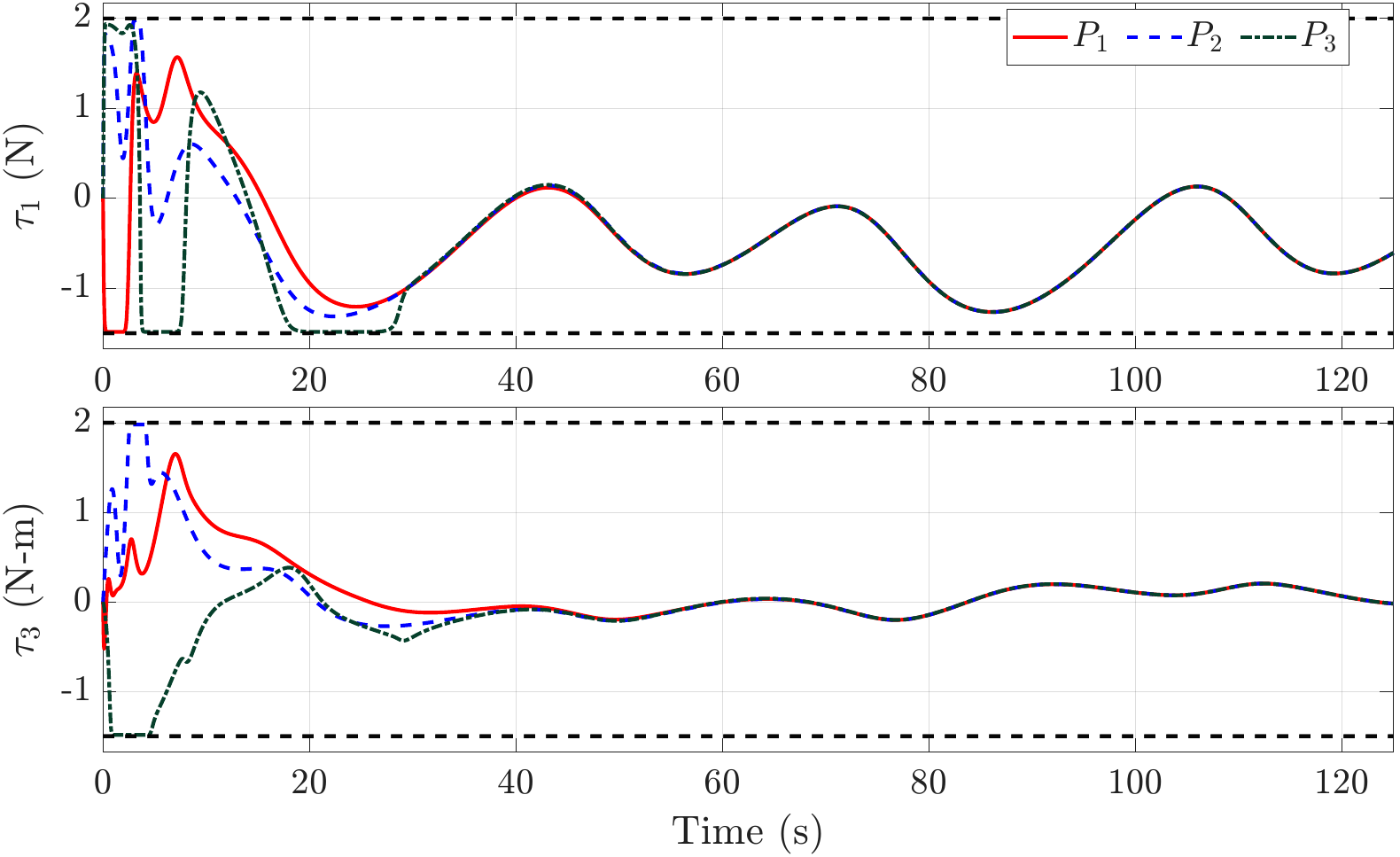}
			\caption{Control inputs.}
			\label{fig:inp_sat_eight_tau}
		\end{subfigure}
		\caption{Path-following performance for the case of an 8-shaped path under input saturation.}
		\label{fig:inp_sat_eight}
	\end{figure*}
	Three different initial conditions, namely $P_1$, $P_2$, and $P_3$, have been considered and are denoted by a triplet as $(x,\,y,\,\psi)$: $(5~\text{m},0~\text{m},120^\circ)$, $(2~\text{m},-2~\text{m},70^\circ)$, and $(5~\text{m},-3~\text{m},100^\circ)$. 
	From \Cref{fig:inp_sat_eight}, it is evident that the USV is able to follow the specified 8-shaped reference path. 
	Following the expected behavior, the USV first aligns itself with LOS (see $\theta_U$ going to zero), reaches the path (see $R$ going to zero), and catches up with the assigned speed (see \cref{fig:inp_sat_eight_speed_lead_R}). 
	The nullification of sliding surfaces and the Lyapunov function also confirms this behavior, indicating that all objectives are met and resulting in accurate path-following.
	It can be verified from \Cref{fig:inp_sat_eight_tau} that the control demands stay within the bounds, that is, $\tau_1 \in (-1.5,2)$ N and $\tau_3 \in (-1.5,2)$ N-m.
	\subsection{Comparison of results with others}
	In this section, we present the comparison of the performance of the proposed strategies. In the first IGC design strategy, the actuator bounds were ignored during the design. Later, we imposed the actuator bounds on an ad hoc basis using a saturation block. In the second IGC design strategy, bounds on the control input were explicitly accounted for in the control input derivation, ensuring that the control demand respects the actuator bounds. To perform a comparison, we consider an 8-shaped reference path with the initial configuration of the USV denoted by $P_1$ as $(5~\text{m},0~\text{m},60^\circ)$. The initial surge speed of the USV is $0.5$ m/s, with zero initial sway and yaw rates. 
	\begin{figure*}[ht!]
		\centering
		\begin{subfigure}{0.5\textwidth}
			\includegraphics[width=\linewidth]{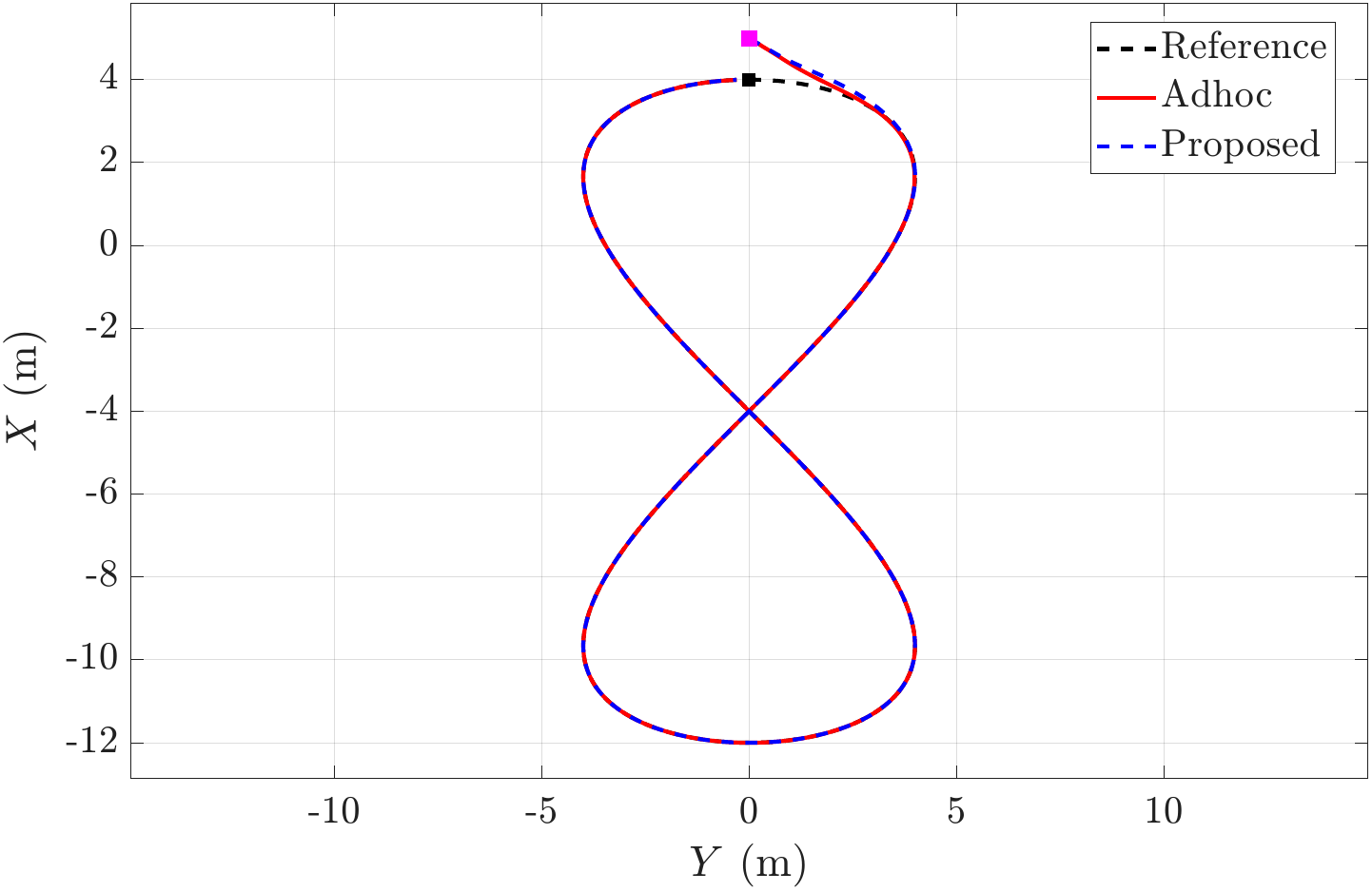}
			\caption{Path of the USV in the inertial frame.}
			\label{fig:comp_1eight_path}
		\end{subfigure}%
		\begin{subfigure}{0.5\textwidth}
			\includegraphics[width=\linewidth]{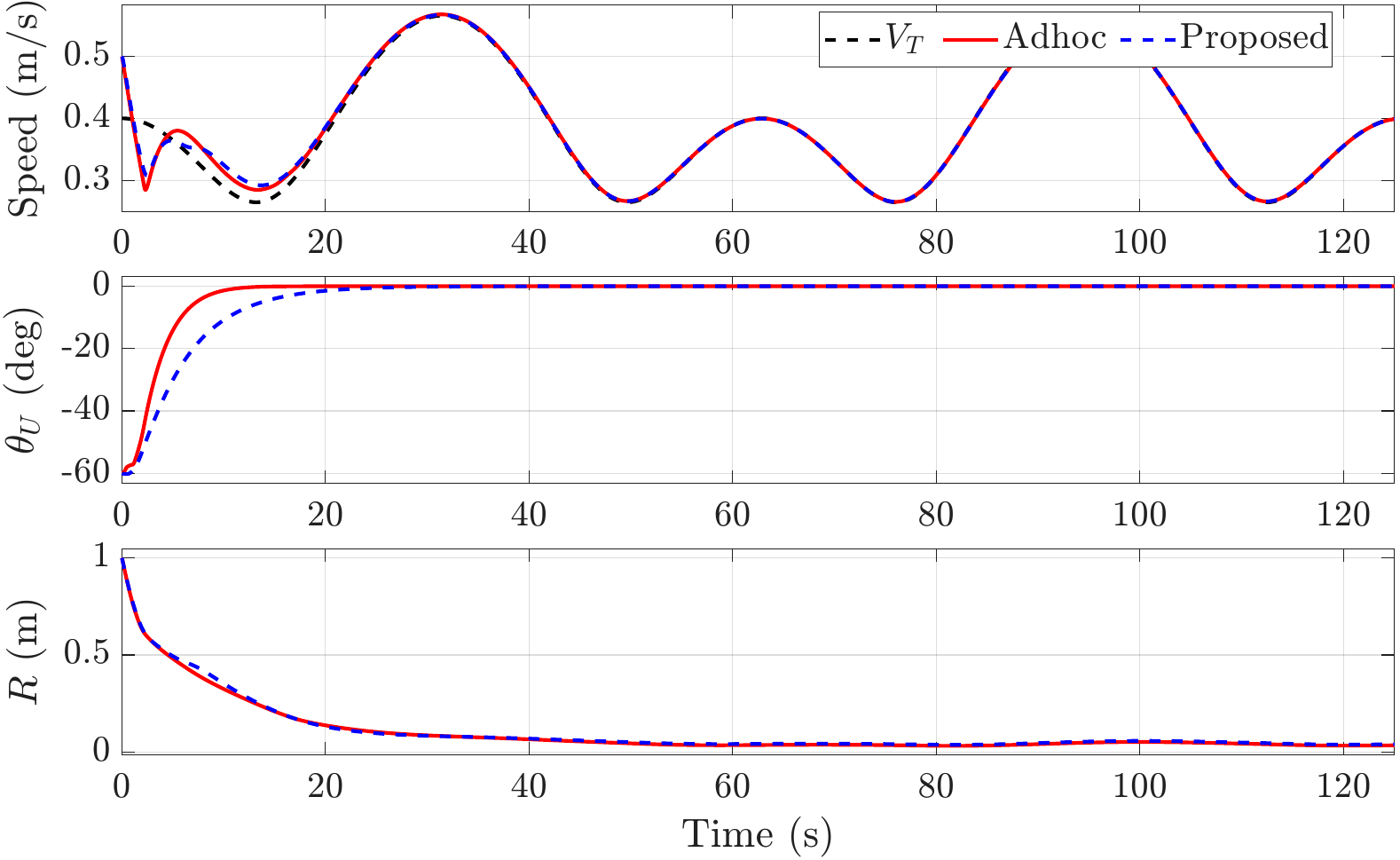}
			\caption{Speed, $\theta_U$ and $R$.}
			\label{fig:comp_1eight_speed_lead_R}
		\end{subfigure}
		\begin{subfigure}{0.5\textwidth}
			\includegraphics[width=\linewidth]{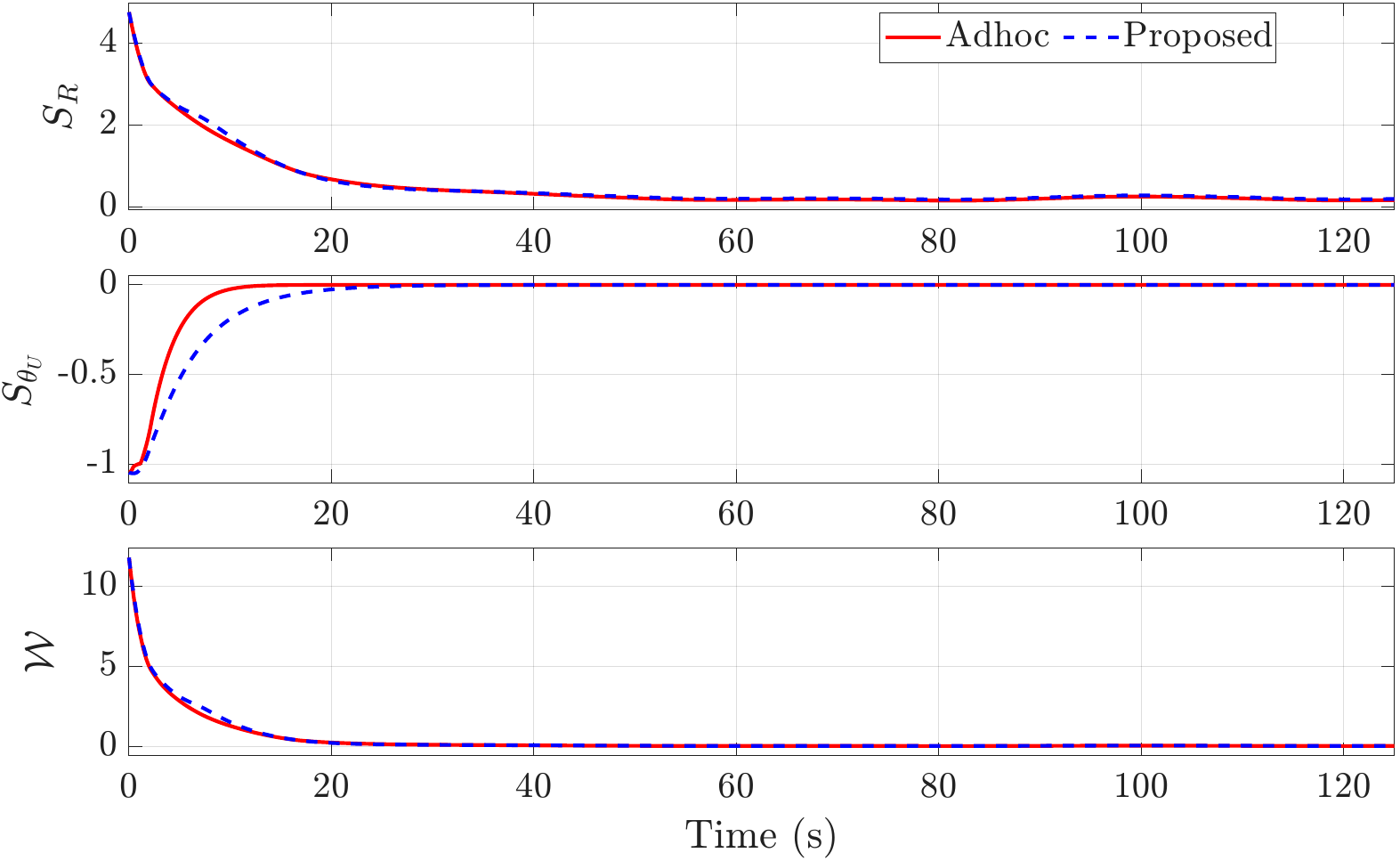}
			\caption{Evolutions of sliding surfaces and Lyapunov function.}
			\label{fig:comp_1eight_surface}
		\end{subfigure}%
		\begin{subfigure}{0.5\textwidth}
			\includegraphics[width=\linewidth]{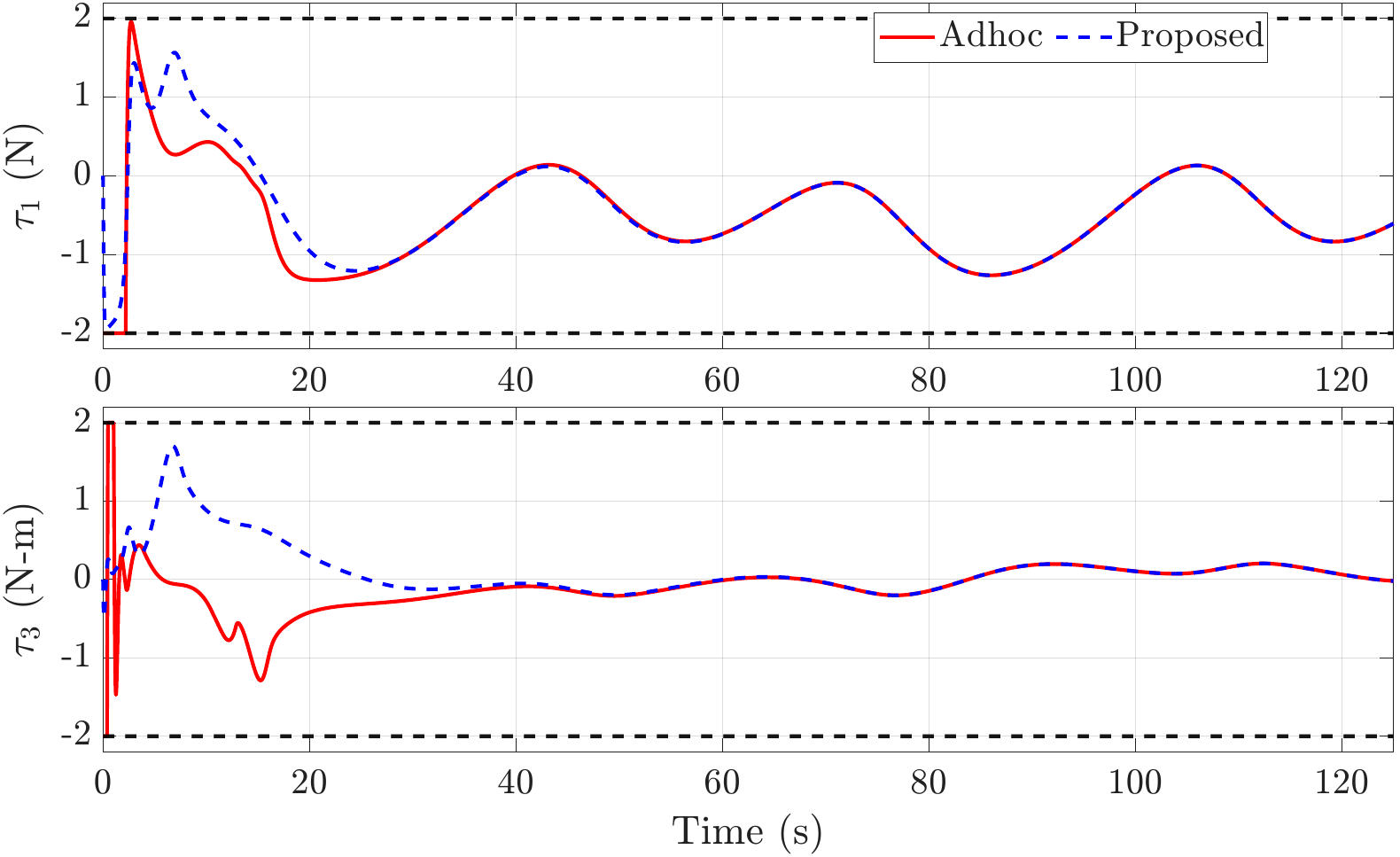}
			\caption{Control inputs.}
			\label{fig:comp_1eight_tau}
		\end{subfigure}
		\caption{Comparison of the path-following performance for an 8-shaped reference path with input saturation constraints and without input saturation constraints. }
		\label{fig:comp_1eight}
	\end{figure*}
	Considering actuator bounds in the design is important, as all practical actuators have limited control authority. 
	But in many cases, it is easier to design a stable controller assuming unlimited control authority available. 
	Hence, it is common practice to limit the control input demand to the maximum or minimum bound in cases when the demand exceeds the bounds.
	The control scheme in \Cref{eqn:tau_final} is without consideration of bounds on the inputs. 
	However, we imposed a saturation limit during simulations and labeled this `Ad hoc' in the simulation figures legend. 
	In the control scheme presented in \Cref{eqn:ip_dynamics,eqn:tauc}, the bounds on the actuator are accounted for in the design, so that specified actuator bounds are never violated. 
	The comparison among the two methods is presented in \Cref{fig:comp_1eight}. 
	With both the proposed methods, the USV can follow the specified path at varying speeds. It can be seen from \Cref{fig:comp_1eight_tau} that during the transients, the control demand from the `Ad hoc` methods is not smooth, frequently hitting upper and lower bounds, and has corners. 
	In contrast, the control input demanded by the proposed method is smooth and free of such corners. 
	This eliminates jerks in the actuators, thereby extending their lifespan by reducing wear and tear from sudden movements. 
	After the transients, the USV catches up with the virtual reference point, and the control input profile is similar. 
	This affirms that the proposed strategy does not affect the nominal performances. Above all, the second method provides a mathematical guarantee of stability under bounded input conditions. 
	\section{Conclusion}\label{sec:conclusion}
	This paper proposed nonlinear IGC strategies for a USV to follow an arbitrary smooth path with and without input saturation bounds. 
	Many existing works in literature achieve a path-following behavior along an arbitrary path by breaking the path into smaller, simpler segments, such as lines and circles, and then determining the desired heading.
	In contrast, the current work formulates this problem as an interceptor guidance problem, treating any arbitrary smooth path as a continuum of virtual target or reference points. 
	The rendezvous of the USV with this moving target results in precise path-following behavior. 
	A key point here is that, unlike interceptors, the proposed scheme allows the USV to move along the path at time-varying speeds. The stability of this nonlinear IGC strategy was formally proven using Lyapunov methods. Numerical simulations, performed using the CyberShip II model, validated the effectiveness of the proposed IGC algorithms. The results demonstrated an excellent path-following behavior across varied reference paths and USV initial configurations. The design strategy incorporating the smooth input saturation model gave a smooth control input profile, enhancing the actuators' lifespan.
	
	\bibliography{references.bib}
	
\end{document}